\author[1]{Georgel Calin}
\author[2]{Egor Derevenetc}
\author[3]{Rupak Majumdar}
\author[1]{\newline{}\hspace*{0cm}\hfill Roland Meyer} % newline + hfill keeps the name together
\affil[1]{\vspace{-0.5cm}University of Kaiserslautern, \texttt{\{calin, meyer\}@cs.uni-kl.de}}
\affil[2]{Fraunhofer ITWM, \texttt{egor.derevenetc@itwm.fraunhofer.de}}
\affil[3]{MPI-SWS, \texttt{rupak@mpi-sws.org}}
\title{A Theory of Partitioned Global Address Spaces\footnote{The second author
was granted by the Competence Center High Performance Computing and
Visualization (CC-HPC)
of the Fraunhofer Institute for Industrial Mathematics (ITWM). The work was partially supported by the PROCOPE project ROIS:
Robustness under Realistic Instruction Sets.}}
\authorrunning{G. Calin, E. Derevenetc, R. Majumdar, and R. Meyer}
\DeclareRobustCommand*\myul{%
    \def\SOUL@everyspace{\underline{\space}\kern\z@}
    \def\SOUL@everytoken{%
     \setbox0=\hbox{\the\SOUL@token}%
     \ifdim\dp0>\z@
        \the\SOUL@token
     \else
        \underline{\the\SOUL@token}%
     \fi}
\SOUL@}
\begin{document}

\newif\ifappendix

\maketitle
\vskip -1em
\begin{abstract}
Partitioned global address space (PGAS) is a parallel programming model for
the development of high-performance applications on clusters.
It provides a global address space partitioned among the cluster nodes, and is
supported in programming languages like C, C++, and Fortran by means of APIs.
In this paper we provide a formal model for the semantics of single
instruction, multiple data programs using PGAS APIs.
Our model reflects the main features of popular real-world APIs such as
SHMEM, ARMCI, GASNet, GPI, and GASPI.

A key feature of PGAS is the support for one-sided communication:
a node may directly read and write the memory located at a remote node,
without explicit synchronization with the processes running on the
remote side.
One-sided communication increases performance by decoupling process
synchronization from data transfer, but requires the programmer to reason 
about appropriate synchronizations between reads and writes.
As a second contribution, we propose and investigate {\em robustness}, 
a criterion for correct synchronization of PGAS programs.
Robustness corresponds to acyclicity of a suitable
happens-before relation defined on PGAS computations.
The requirement is finer than the classical data race freedom and
rules out most false error reports.

Our main technical result is an algorithm for checking robustness of PGAS programs. 
The algorithm makes use of two insights.
Using combinatorial arguments we first show that, if a PGAS program is not robust,
then there are computations in a certain normal form that violate happens-before acyclicity. 
Intuitively, normal-form computations delay remote accesses in an ordered way.
We then devise an algorithm that checks for cyclic normal-form computations.
Essentially, the algorithm is an emptiness check for a novel automaton model
that accepts normal-form com\-pu\-ta\-tions in streaming fashion.
Altogether, we prove the robustness problem is $\pspace$-complete.
\end{abstract}

\section{Introduction}\label{Section:Introduction}

Partitioned global address space (PGAS) is a parallel programming model for the development of high-performance software on clusters.
The PGAS model provides a global address space to the programmer
that is partitioned among the cluster nodes (see Figure~\ref{Figure:onetooneRDMA}(b)).
Nodes can read and write their local memories, but additionally access the remote address space
through (synchronous or asynchronous) API calls.
PGAS is a popular programming model, and supported by many PGAS APIs,
such as SHMEM~\cite{chapman2010introducing}, ARMCI~\cite{nieplocha1999armci},
GASNET \cite{bonachea2002gasnet}, GPI \cite{machado2009fraunhofer}, and GASPI~\cite{GASPI},
as well as by languages for high-performance computing, such as
UPC~\cite{UPC}, Titanium~\cite{hilfinger2005titanium}, and Co-Array Fortran~\cite{numrich1998co}.

A key ingredient of PGAS APIs is their support for one-sided communication.
Unlike in traditional message passing interfaces,
a node may directly read and write the memory located at a remote node without
explicit synchronization with the remote side.
One-sided com\-mu\-nication can be efficiently implemented on top of networking
hardware featuring remote direct memory access (RDMA),
and increases performance of PGAS programs by avoiding
unnecessary synchronization between the sender and the
receiver \cite{machado2009fraunhofer,dinanimplementation}.

However, the use of one-sided communication introduces additional non-determinism
in the ordering of memory reads and writes, and makes reasoning about program correctness
harder.
Figure~\ref{Figure:onetooneRDMA}(a) demonstrates a subtle bug arising out of
improper synchronizations: while the barriers ensure all processes are at the same
control location, the remote writes may or may not have completed when address
$\lit*y$ is accessed after the barrier.

\begin{figure}[t]
\begin{minipage}[b]{0.40\textwidth}
\begin{lstlisting}
int x = 1, y = 0;
write(x, rightNeighbourRank,
      y, myWriteQ);
barrier();
assert(y == 1);
\end{lstlisting}
\vspace{0.25cm}
\end{minipage}
\hspace{0.35cm}
\begin{minipage}[b]{0.55\textwidth}
\centering
\vspace{-1.5cm} % saves space
\begin{tikzpicture}[scale=0.75,transform shape]
	%\node (s1) at (1,1.95) [draw,minimum height=0.5cm,shape=rectangle] {\small Shared Memory $1\times\addrdomain$};
	%\node (p1) at (0.075,0.8) [draw,minimum height=0.5cm,shape=rectangle] {\small Process 1};
	%\node (all1) at (1,2.05) [draw,minimum width=4cm, minimum height=1.2cm,shape=rectangle] {};
	%\node (all1text) at (1,2.45) {\small All Memory};
	\node (nic1) at (2.375,0.8) [draw,minimum height=0.5cm,shape=rectangle] {\small NIC};
	% for old picture comment following lines and uncomment previous
	\node (s1) at (1,2.4) [draw,minimum height=0.5cm,shape=rectangle] {\small Shared Memory $1\times\addrdomain$};
	\node (p1) at (-0.095,0.8) [draw,minimum height=0.5cm,shape=rectangle] {\small Process 1};
	\node (local1) at (0.85,1.75) [draw,minimum height=0.5cm,shape=rectangle] {\small Local Registers};

	\draw[dash pattern=on 1pt off 0.9pt] node [label=above left:Node 1] at (1,1.55) [draw,minimum width=4.2cm, minimum height=2.4cm,shape=rectangle] {};	

	\draw[stealth'-stealth'] (p1) -- (nic1);
	\draw[stealth'-stealth'] (nic1) to (s1.349);
	%\draw[stealth'-stealth'] (p1) to (all1.213);
	% for old picture comment following lines and uncomment previous	
	\draw[stealth'-stealth'] (p1.156) to (s1.189);
	\draw[stealth'-stealth'] (p1.34) to (local1.205);
			
	\node (dots) at (3.75,1.75) {$\bullet\bullet\bullet$};

	%\node (s0) at (6.5,1.95) [draw,minimum height=0.5cm,shape=rectangle] {\small Shared Memory $\mathrm{N}\times\addrdomain$};
	%\node (p0) at (5.575,0.8) [draw,minimum height=0.5cm,shape=rectangle] {\small Process N};
	%\node (all0) at (6.5,2.05) [draw,minimum width=4.05cm, minimum height=1.2cm,shape=rectangle] {};
	%\node (all0text) at (6.5,2.45) {\small All Memory};
	\node (nic0) at (7.875,0.8) [draw,minimum height=0.5cm,shape=rectangle] {\small NIC};
	% for old picture comment following lines and uncomment previous
	\node (s0) at (6.5,2.4) [draw,minimum height=0.5cm,shape=rectangle] {\small Shared Memory $\mathrm{N}\times\addrdomain$};
	\node (p0) at (5.405,0.8) [draw,minimum height=0.5cm,shape=rectangle] {\small Process N};
	\node (local0) at (6.35,1.75) [draw,minimum height=0.5cm,shape=rectangle] {\small Local Registers};
	
	\draw[dash pattern=on 1pt off 0.9pt] node [label=above left:Node N] at (6.5,1.55) [draw,minimum width=4.2cm, minimum height=2.4cm,shape=rectangle] {};	

	\draw[stealth'-stealth'] (p0) -- (nic0);
	\draw[stealth'-stealth'] (nic0) to (s0.349);
	%\draw[stealth'-stealth'] (p0) to (all0.213);
	% for old picture comment following lines and uncomment previous
	\draw[stealth'-stealth'] (p0.156) to (s0.189);
	\draw[stealth'-stealth'] (p0.34) to (local0.205);
	
	\draw[stealth'-stealth'] (nic1) |- (3,0.1) -| node [below] {\small Network} (nic0);
\end{tikzpicture}
\end{minipage}

\captionof{figure}{(a) Program \onetoone{} is the \emph{compute and exchange results} idiom often found in PGAS applications.
Each process copies an integer value to its neighbour.
\texttt{write} asks the hardware to copy the value of address $x$ to $y$ on the right neighbouring node.
\texttt{barrier} blocks until all processes reach the barrier.
The assertion can fail, as the \texttt{barrier}
may execute before the \texttt{write} completes.
(b) PGAS architecture --- NIC stands for network interface controller.}
\label{Figure:onetooneRDMA}
\end{figure}
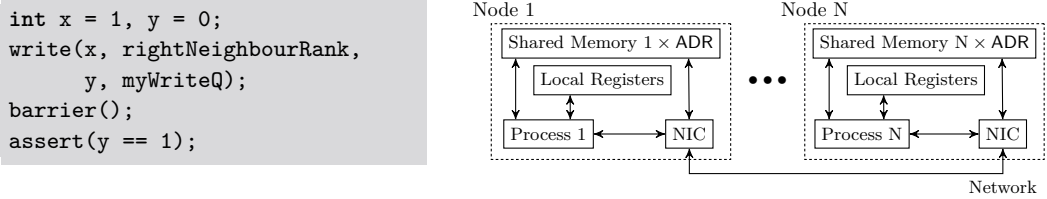

We make two contributions in this paper.

First, we provide a core calculus of PGAS APIs that models concurrent processes
sharing a global address space and accessing remote memory through one-sided reads and writes.
Despite the popularity of PGAS APIs in the high-performance computing community,
to the best of our knowledge, there are no formal models for common PGAS APIs.

Second,
we define and study a correctness criterion called \emph{robustness} for PGAS programs.
To understand robustness,
we begin with a classical and intuitive correctness condition, {\em sequential consistency}~\cite{Lamport79}.
A computation is sequentially consistent if its memory accesses happen atomically and in the order
in which they are issued.
Sequential consistency is too strong a criterion for PGAS programs,
where time is required to access remote memory and accesses themselves can be reordered.
Robustness is the weaker notion that all computations of the program have the same
happens-before (data and control) dependencies \cite{ShashaSnir88}
as some sequentially consistent computation.
Our notion of robustness captures common programming error patterns \cite{dice09:park,Muzahid2012},
and is derived from a similar notion in shared memory multiprocessing \cite{ShashaSnir88}.
Related correctness criteria have been proposed for
weak memory models~\cite{burckhardt-musuvathi-CAV08,Owens2010,Alglave2010,AlglaveM11,BMM11,Sen2011,bouajjani2013checking}.

A simpler correctness property would be {\em data race freedom} (DRF), in
which no two processes access the same address at the same time, with at least one access being a write~\cite{AdveHillDRF1993}.
Indeed, data race free programs are sequentially consistent.
Unfortunately, DRF is too strong a requirement in practice~\cite{park2011efficient}, and leads to numerous false alarms.
Many common synchronization idioms for PGAS programs, such as producer-consumer synchronization,
and many concurrent data structure implementations,
contain benign data races.
Instead, the notion of robustness captures the intuitive requirement that, even when events are
reordered in a computation, there are no causality cycles.
Our notion of causality is the standard {\em happens-before} relation from~\cite{ShashaSnir88}.

We study the algorithmic verification of robustness.
Our main result is that robustness is decidable (actually $\pspace$-complete)
for PGAS programs, assuming a finite data domain and finite memory.
Note that our model of PGAS programs is infinite-state even when the data domain is finite:
one-sided communication allows unboundedly many requests to be in flight simultaneously (a feature modeled in our formalism
using unbounded queues).

Our decidability result uses two technical ingredients.
First, we show that among all computations violating robustness, there is always
one in a certain normal form.
The normal form partitions the violating computation into phases:
the first phase initiates memory reads and writes, and the latter phases complete the
reads and writes in the same order in which they were initiated.

Second, we provide an algorithm to detect violating computations in this normal form.
We take a language-theoretic view, and introduce a multiheaded automaton model which
can accept violating computations in normal form.
Then the problem of checking robustness reduces to checking emptiness for multiheaded automata.
Interestingly, since the normal form maintains orderings of accesses,
the multiple heads can be exploited to accept violating computations without explicitly modeling
unbounded queues of memory access requests.
The resulting class of languages contains non-context-free ones (such as $a^nb^nc^n$), but retains sufficient decidability properties.
Altogether this yields a $\pspace$ decision procedure for checking robustness of programs using PGAS APIs.

For lack of space, full constructions and proofs are given in the appendix.

\smallskip
\noindent\textbf{Related Work}
Although PGAS APIs are popular in the high-performance computing
community~\cite{bonachea2002gasnet,chapman2010introducing,GASPI,machado2009fraunhofer,nieplocha1999armci},
to the best of our knowledge,
no previous work provides a unifying formal semantics that incorporates
one-sided asynchronous communication.
As for synchronization correctness, only recently Park et al.
proposed a testing framework for data race detection and implemented it for the UPC language~\cite{park2011efficient}.
However, the authors argue that many data races are actually not harmful,
a claim they support through the analysis of the NAS Parallel Benchmarks~\cite{UPCNPB}.
For this reason, in contrast to data race freedom~\cite{AdveHillDRF1993},
we consider robustness as a more precise notion of appropriate synchronization.

The robustness problem was posed by Shasha and Snir~\cite{ShashaSnir88} for shared memory multiprocessing.
They showed that non sequentially consistent computations have a happens-before cycle.
Alglave and Maranget \cite{Alglave2010,AlglaveM11} extended this result. They developed a general theory for reasoning about robustness problems, even among different architectures.
Owens \cite{Owens2010} proposed a notion of appropriate synchronization that is based on triangular data races.
Compared to robustness, triangular race freedom requires heavier synchronization, which is undesirable for performance reasons.

We consider here the algorithmic problem of checking robustness.
For programs running on weak memory models the problem has been addressed in~\cite{burckhardt-musuvathi-CAV08,Sen2011,AlglaveM11}, but none of these works provides a (sound and complete) decision procedure.
The first complete algorithm for checking robustness of programs running on Total Store Ordering (TSO) architectures was given in~\cite{BMM11}.
It is based on the following locality property.
If a TSO program is not robust, then there is a violating computation where only one process delays commands.
This insight leads to a reduction of robustness to reachability in the sequential consistency model~\cite{bouajjani2013checking}.
PGAS programs allow more reorderings than TSO ones and,
as a consequence, locality does not hold. Instead, our decision procedure relies on
a complex normal form for computations and on a sophisticated automata-theoretic algorithm to look for normal-form violations.

\section{PGAS Programs}\label{Section:Programs}

\subsection{Features of PGAS Programs}
PGAS programs are \emph{single instruction, multiple data} programs running on a cluster (see Figure~\ref{Figure:onetooneRDMA}(b)).
At run time, a PGAS program consists of multiple processes executing the same code on different nodes.
Each process has a \emph{rank}, which is the index of the node it runs on.
The processes can access a global address space partitioned into local address spaces for each process.
Local addresses can be accessed directly.
Remote addresses  (addresses belonging to different processes) are accessed using API calls, which come in different flavors.

SHMEM~\cite{chapman2010introducing} provides synchronous remote reads where the invoking process
waits for completion of the command.
Remote write commands are asynchronous, and no ordering is guaranteed between writes, even to the same remote node.
The ordering can, however, be enforced by a special fence command.

ARMCI~\cite{nieplocha1999armci} features synchronous as well as asynchronous read and write commands.
The asynchronous variants of the commands return a handle that can be waited upon.
When the wait on a read handle is over, the data being read has arrived and is accessible.
When the wait on a write handle is over,
the data being written has been sent to the network but might not have reached its destination.
Unlike operations to different nodes, operations to the same remote node are executed in their issuing order.% in which they were issued.

GASNet~\cite{bonachea2002gasnet}, like ARMCI, provides both synchronous and asynchronous versions of reads and writes.
Commands return a handle that can be waited upon, and a return from a wait implies full completion of the operation.
The order in which asynchronous operations complete is intentionally left unspecified.

GPI~\cite{machado2009fraunhofer} and GASPI~\cite{GASPI} only support asynchronous read and write commands.
Each read or write operation is assigned a queue identifier.
In GPI, operations with the same queue id and to the same remote node are executed in the order in which they were issued; in GASPI this guarantee does not hold.
One can wait on a queue id, and the wait returns when all commands in the queue are fully completed, on both the local and the remote side.

Summing up, in a uniform PGAS programming model it should be possible to
\begin{asparaitem}
\item perform synchronous and asynchronous data transfers,
\item assign an asynchronous operation a handle or a queue id,
\item wait for completion of an individual command or of all commands in a given queue,
\item enforce ordering between operations.
\end{asparaitem}
We define a core model for PGAS that supports all these features.
Our model only uses asynchronous remote reads and writes
with explicit queues, but is flexible
enough to accommodate all the above idioms.

\subsection{Syntax of PGAS Programs}

We define PGAS programs and their semantics in terms of automata.
A (non-deterministic) \emph{automaton} is a tuple
$\automaton=(\states,\alphabet,\transitions,\initialstate,\finalstates)$,
where $\states$ is a set of states,  $\alphabet$ is a finite alphabet,
$\transitions\subseteq\states\times(\alphabet\cup\set{\emptysequence})\times \states$
is a set of transitions,
$\initialstate\in\states$ is an initial state,
and $\finalstates\subseteq\states$ is a set of final states.
We call the automaton \emph{finite} if the set of states is finite.
We write $\state_1\transitionto{a}\state_2$ if $(\state_1,a,\state_2)\in\transitions$, and extend
the relation to computations $\sigma\in\alphabet^*$ in the expected way.
The \emph{language} of the automaton is
$\langOf{\automaton}:=\setcond{\sigma\in\alphabet^*}{\initialstate\transitionto{\sigma}\state\text{ for some }\state\in\finalstates}$.
We write $\length{\sigma}$ for the length of a computation $\sigma\in\Sigma^*$,
and use $\succorder{\sigma}$ to denote the successor
relation among the letters in $\sigma$.
We write $a\before{\sigma}b$ if $\sigma=\sigma_1\cdot{}a\cdot\sigma_2\cdot{}b\cdot\sigma_3$ for some $\sigma_1,\sigma_2,\sigma_3\in\Sigma^*$.

A \emph{PGAS program} $(\program,\nodecount)$ consists of a program code $\program$ and a fixed number $\nodecount \geq 1$ of cluster nodes.
The program code $\program:=(\controlstates,\commands,\instructions,\initialcontrolstate,\controlstates)$ is a finite automaton with a set of control states $\controlstates$, all of them are final, initial state $\initialcontrolstate$, and a set of transitions $\instructions$ labeled with \emph{commands} $\commands$.

Let $\datadomain$, $\addrdomain$, and $\queuedomain$ be finite domains of values (containing a value $0$), addresses, and queue identifiers, respectively.
Let $\regdomain$ be a finite set registers that take values from $\datadomain$.
The grammar of commands is given in Figure~\ref{Figure:SyntaxCommands}.
For simplicity, we will assume $\datadomain = \addrdomain = \queuedomain$.
The set of expressions is defined over constants from $\datadomain$, registers from $\regdomain$, and (unspecified) operators over $\datadomain$.
The set of commands $\commands$ includes local assignments and conditionals (\lit*{assume}),
remote read and write API calls $\texttt{read}$ and $\texttt{write}$ respectively, and barriers $\texttt{barrier}$.

At run time, there is a process on each node $\overline{1,\nodecount}$ that executes program $\program$, where $\overline{M,N}:=\set{M,M+1,\ldots, N}$.
We will identify each process with its rank from $\ranks:=\overline{1,\nodecount}$.
For modeling purposes, one may assume there are special constant expressions
that let a process learn about its rank in $\ranks$ and about the total number of processes $\nodecount$.

\begin{figure}[t]
\vskip -2.5em
\begin{minipage}[b]{0.6\textwidth}
\begin{small}
\setlength{\grammarindent}{2em}
\setlength{\grammarparsep}{\parskip}
\begin{grammar}
  <cmd> ::= <reg> $\leftarrow$ "mem["<expr>"]"
  \alt "mem["<expr>"]" $\leftarrow$ <expr>
  \alt <reg> $\leftarrow$ <expr>
  \alt "assume("<expr>")"
  \alt "read("<local-adr>","<rank>","<remote-adr>","<que-id>")"
  \alt "write("<local-adr>","<rank>","<remote-adr>","<que-id>")"
  \alt "barrier"

\end{grammar}
\end{small}
\captionof{figure}{
  Syntax of commands. 
\synt{reg} ranges over $\regdomain$; 
expressions \synt{expr}, 
local addresses \synt{local-adr}, 
remote addresses \synt{remote-adr}, and
queue identifiers \synt{que-id} range over expressions;
ranks \synt{rank} over $\overline{1,\nodecount}$-valued expressions.
}
\label{Figure:SyntaxCommands}
\end{minipage}
\hfill
\begin{minipage}[b]{0.37\textwidth}
\centering
\begin{tikzpicture}[nodes={rectangle,draw=none,fill=none}]
    \node (write0) {$\nodeone{\writekind}$}; % &
    \node [node distance=2.5cm,right of=write0] (write1) {$\nodetwo{\writekind}$}; % \\
    \node [node distance=0.75cm,below of=write0] (popa0) {$\nodeone{\popakind}$}; % &
    \node [node distance=0.75cm,below of=write1] (popa1) {$\nodetwo{\popakind}$}; % \\
    \node [node distance=0.75cm,below of=popa0] (popb0) {$\nodeone{\popbkind}$}; % &
    \node [node distance=0.75cm,below of=popa1] (popb1) {$\nodetwo{\popbkind}$}; % \\
    \node [node distance=0.75cm,below of=popb0] (barrier0) {$\nodeone{\barrierkind}$}; % &
    \node [node distance=0.75cm,below of=popb1] (barrier1) {$\nodetwo{\barrierkind}$}; % \\
    \node [node distance=0.75cm,below of=barrier0] (load0) {$\nodeone{\loadkind}$}; % \\

  \draw[<->] (write0)   edge node [midway,right] {$\scriptstyle \equivalence$} (popa0);
  \draw[<->] (popa0)    edge node [midway,right] {$\scriptstyle \equivalence$} (popb0);
  \draw[->]  (write0)   edge[bend right=40] node [very near end,left] {$\scriptstyle \po$} (barrier0);
  \draw[->]  (barrier0) edge node [pos=0.7,left] {$\scriptstyle \po$} (load0);

  \draw[<->] (write1)   edge node [midway,left] {$\scriptstyle \equivalence$} (popa1);
  \draw[<->] (popa1)    edge node [midway,left] {$\scriptstyle \equivalence$} (popb1);
  \draw[->]  (write1)   edge[bend left=40] node [very near end,right] {$\scriptstyle \po$} (barrier1);

  \draw[->]  (load0)    edge node [pos=0.85,below] {$\scriptstyle \cf$} (popb1.west);
  \draw[<->] (barrier0) edge [bend right=15] node [very near end, below] {$\scriptstyle \equivalence$} (barrier1);
\end{tikzpicture}
\vskip 0.5em
\captionof{figure}{Happens-before relation of $\tau_{\onetoone}$ (Example~\ref{Example:onetooneComputation}).
Computation $\tau_{\onetoone}$ violates robustness.}
\label{Figure:onetooneTrace}
\end{minipage}

\end{figure}

\subsection{Semantics of PGAS Programs}
The semantics of a PGAS program $(\program, \nodecount)$ is defined using a \emph{state-space automaton}
$\eventautomaton(\program, \nodecount):=(\eventstates,\events,\eventtransitions, \initialeventstate,\finaleventstates)$.
A state $\state\in\eventstates$ is a tuple
$\state=(\stateconf,\memoryconf,\queueconfa,\queueconfb)$, where
state configuration $\stateconf\colon\ranks\to\controlstates$ maps each process to its current control state,
memory configuration $\memoryconf\colon\ranks\times(\regdomain\cup\addrdomain)\to\datadomain$ maps each process
to the values stored in each register and at each address,
queue configuration
$\queueconfa\colon\ranks\times\queuedomain\to(\ranks\times\addrdomain\times \ranks\times\addrdomain)^*$
maps each process to remote read and write requests that were issued,
and
$\queueconfb\colon\ranks\times\queuedomain\to(\ranks\times\addrdomain\times \datadomain)^*$
contains values to be transferred.

The initial state is
$\initialeventstate:=(\initialstateconf,\initialmemoryconf,\initialqueueconfa, \initialqueueconfb)$,
where for all ranks $\rank\in\ranks$, registers and addresses
$\anaddr\in\regdomain\cup \addrdomain$, and queue identifiers
$\queueid\in\queuedomain$, we have
$\initialstateconf(\rank):=\initialcontrolstate$,
$\initialmemoryconf(\rank,\anaddr):=0$, and
$\initialqueueconfa(\rank,\queueid):=\emptysequence=:\initialqueueconfb(\rank, \queueid)$.
The set of final states is
$\finaleventstates:=\setcond{(\stateconf,\memoryconf,\queueconfa,
\queueconfb)\in\eventstates}{\queueconfa(\rank, \queueid)=\emptysequence = \queueconfb(\rank, \queueid)\text{ for all }\rank\in\ranks, \queueid\in\queuedomain}$.
The semantics of commands ensures queues can always be emptied,
so acceptance with empty queues is not a restriction.

The alphabet of $\eventautomaton(\program, \nodecount)$ is the set of
\emph{events} $\events:=\kinddomain\times\ranks\times((\ranks\times\addrdomain)\cup\set{\bottom})$ with
\emph{event kinds} $\kinddomain:=\set{\loadkind,\storekind,\assignkind,\assumekind,\readkind,\writekind ,\popakind,\popbkind,\barrierkind}$.
Consider an event $\event=(\kind,\rank,(\rank_{\anaddr},\anaddr))\in\events$.
We use $\kindOf{\event}=\kind$ to determine the kind of the event,
$\rankOf{\event}=\rank$ for the rank of the process that produced the event, and
$\addrOf{\event}=(\rank_{\anaddr},\anaddr)$ to obtain the rank and the address that are \emph{accessed} by the event.
If $\kindOf{\event}\in\set{\loadkind,\popakind}$, then $\event$ is said to be a \emph{read of $(\rank_{\anaddr},\anaddr)$}.
If $\kindOf{\event}\in\set{\storekind,\popbkind}$, then $\event$ is a \emph{write of} address $\addrOf{\event}$.

Table~\ref{Table:EventAutomatonRules} shows a subset of the transition relation $\eventtransitions$;
other rules are similar.
When a process executes a remote write command, Rule~($\writekind$), a
new item is added to a queue in $\queueconfa$.
This item contains the source rank and source address from which the data will be copied,
together with the destination rank and destination address to which the data will be copied.
Eventually, the item is popped from the queue in $\queueconfa$,
Rule~($\popakind$), the value is read from the source address, and a new item is pushed into the corresponding
queue in $\queueconfb$.
The new item contains the destination rank and destination address, and the value that was read from the source address.
Eventually, this item is popped from the queue, Rule~($\popbkind$),
and the value is written to the destination address in the destination rank.
Modeling two queue configurations yields a symmetry between remote writes and reads:
a read can be interpreted as a write that comes upon request.
Moreover, two queue configurations capture well the delays
between request creation, reading of the data, and writing of the data.

The semantics of a PGAS program $\computationsOf{\program, \nodecount}:=\langOf{\eventautomaton(\program, \nodecount)}\subseteq \events^*$
is the set of computations of the state-space automaton.

\begin{table}[t!]
\centering

\therules{
\ifappendix
  \therule
  {$\command = \thestore{\anexpr_{\anaddr}}{\anexpr_{\aval}}$}
  {$\state
    \transitionto{(\storekind,\rank,(\rank,\valueOf{\anexpr_{\aval}}))}
   (\stateconf',\memoryconf[(\rank,\valueOf{\anexpr_{\anaddr}}):=\valueOf{\anexpr_{\aval}}],\queueconfa,\queueconfb)$}
  {($\storekind$)}

  \therule
  {$\command = \theassign{\areg}{\anexpr}$}
  {$\state
    \transitionto{(\assignkind,\rank,\bottom)}
    (\stateconf',\memoryconf[(\rank,\areg):=\valueOf{\anexpr}],\queueconfa,\queueconfb)$}
  {($\assignkind$)}

  \therule
  {$\command = \theassert{\anexpr}$\qquad $\valueOf{\anexpr} \neq 0$}
  {$\state
    \transitionto{(\assertkind,\rank,\bottom)}
    (\stateconf',\memoryconf,\queueconfa,\queueconfb)$}
  {($\assertkind$)}

  \therule
  {$\command=\theread{\anexpr_{\anaddr}^{\local}}{\anexpr_{\rank}^{\remote}}{\anexpr_{\anaddr}^{\remote}}{\anexpr_{\queueid}}$
   \qquad
   $\queueconfa(\rank,\valueOf{\anexpr_{\queueid}}) = \alpha$}
  {$\state
    \transitionto{(\readkind,\rank,\bottom)}(\stateconf',\memoryconf,\queueconfa[(\rank,\valueOf{\anexpr_{\queueid}}):=\alpha\cdot(\valueOf{\anexpr_{\rank}^{\remote}},\valueOf{\anexpr_{\anaddr}^{\remote}},\rank,\valueOf{\anexpr_{\anaddr}^{\local}})],\queueconfb)$}
  {($\readkind$)}

\else

  \therule
  {$\command = \theload{\areg}{\anexpr_{\anaddr}}$}
  {$\state
    \transitionto{(\loadkind,\rank,(\rank,\valueOf{\anexpr_{\anaddr}}))}
    (\stateconf',\memoryconf[(\rank,\areg):=\memoryconf(\rank,\valueOf{\anexpr_{\anaddr}})],\queueconfa,\queueconfb)$}
  {($\loadkind$)}

  \therule
  {$\command=\thewrite{\anexpr_{\anaddr}^{\local}}{\anexpr_{\rank}^{\remote}}
{\anexpr_{\anaddr}^{\remote}}{\anexpr_{\queueid}}$
\qquad   $\queueconfa(\rank,\valueOf{\anexpr_{\queueid}})=\alpha$}
  {$\state
    \transitionto{(\writekind,\rank,\bottom)}
(\stateconf',\memoryconf,\queueconfa[(\rank,\valueOf{\anexpr_{\queueid}}
):=\alpha\cdot(\rank,\valueOf{\anexpr_{\anaddr}^{\local}},\valueOf{\anexpr_{
\rank}^{\remote}}, \valueOf{\anexpr_{\anaddr}^{\remote}})],\queueconfb)$}
  {($\writekind$)}

  \therule
  {$\queueconfa(\rank,\queueid)=(\rank_{\source},\anaddr_{\source},\rank_{\destination},\anaddr_{\destination})\cdot\alpha$
  \qquad
  $\queueconfb(\rank,\queueid)=\beta$}
  {$\state
    \transitionto{(\popakind,\rank,(\rank_{\source},\anaddr_{\source}))}
    (\stateconf,\memoryconf,\queueconfa[(\rank,\queueid):=\alpha],\queueconfb[(\rank,\queueid):=\beta\cdot(\rank_{\destination},\anaddr_{\destination},\memoryconf(\rank_{\source},\anaddr_{\source}))])$}
  {($\popakind$)}

  \therule
  {$\queueconfb(\rank,\queueid)=(\rank_{\destination},\anaddr_{\destination},\aval)\cdot\beta$}
  {$\state
    \transitionto{(\popbkind,\rank,(\rank_{\destination},\anaddr_{\destination}))}
    (\stateconf,\memoryconf[(\rank_{\destination},\anaddr_{\destination}):=\aval],\queueconfa,\queueconfb[(\rank,\queueid):=\beta])$}
  {($\popbkind$)}

  \therule
  {$\stateconf(\rank)\transitionto{\thebarrier}\stateconf'(\rank)$ for each $\rank\in\ranks$}
  {$\state
    \transitionto{(\barrierkind,1,\bottom)\cdot(\barrierkind,2,\bottom)\cdots(\barrierkind,\nodecount,\bottom)}
    (\stateconf',\memoryconf,\queueconfa,\queueconfb)$}
  {($\barrierkind$)}
  \fi
}
\caption{Transition rules for $\eventautomaton(\program,\nodecount)$, given $\controlstate_1\transitionto{\command}\controlstate_2$ and 
current state $\state=(\stateconf,\memoryconf,\queueconfa, \queueconfb)$ with $\stateconf(\rank)=\controlstate_1$.
We set $\stateconf' := \stateconf[\rank:=\controlstate_2]$ to update $\stateconf$ so that process $\rank$ is at $\controlstate_2$.
$\valueOf{\anexpr}$ denotes the evaluation of expression $\anexpr$ in  memory configuration $\memoryconf$.
}
\label{Table:EventAutomatonRules\ifappendix Full\fi}
\end{table}

\begin{example}\label{Example:onetooneComputation}
Consider PGAS program $(\onetoone{}, 2)$ with the program code from Figure~\ref{Figure:onetooneRDMA}(a) being run on two nodes.
It has the following computation:
\begin{align*}
\tau_{\onetoone}=\nodeone{\writekind}\cdot\nodetwo{\writekind}\cdot\nodetwo{\popakind}\cdot\nodeone{\popakind}\cdot\nodeone{\barrierkind}\cdot\nodetwo{\barrierkind}\cdot\nodeone{\loadkind}\cdot\nodetwo{\popbkind}\cdot\nodeone{\popbkind}.
\end{align*}
Bold events belong to the process with rank $2$, the other events to the process with rank $1$. We have $\addrOf{\nodeone{\popakind}}=(1,x)$, $\addrOf{\nodeone{\popbkind}}=(2,y)$. Symmetrically, $\addrOf{\nodetwo{\popakind}}=(2,x)$ and $\addrOf{\nodetwo{\popbkind}}=(1,y)$.
The \texttt{assert} in Figure~\ref{Figure:onetooneRDMA} is a shortcut for a combination of load and assume, and in this computation $\addrOf{\loadkind}=(1, y)$.
\end{example}

\subsection{Simulating PGAS APIs}
Our formalism natively supports asynchronous data transfers and queues.
Operations in the same queue are completed in the order in which they were issued.
Using this, we can model the ordering guarantees given by ARMCI and GPI ---
by putting ordered operations into the same queue.

To model waiting on individual operations (waiting on a handle),
we associate a shadow memory address with each operation.
Before issuing the operation, the value at this address is set to $0$.
When the operation has been issued, the process sends to the same queue a read request which overwrites the shadow memory to $1$.
Now waiting on the individual operation can be implemented by polling on the shadow
address associated with the operation.
Waiting on all operations in a given queue is done similarly.
Synchronous data transfers are modeled by asynchronous transfers,
immediately followed by a wait.

\section{Robustness: A Notion of Appropriate Synchronization}\label{Subsection:Traces}
We now define \emph{robustness}, a correctness condition for PGAS programs.
Robustness is a weaker criterion than requiring all computations to be sequentially consistent~\cite{Lamport79}: it
allows for reordering of events as long as there are no causality cycles. 
As causality relation, we adopt the \emph{happens-before relation}~\cite{ShashaSnir88}.
Fix a computation $\tau\in\computationsOf{\program,\nodecount}$. 
Its happens-before relation is the union of the three 
relations we define next, 
$\happensbeforeOf{\tau} :=\ \progorder\cup\conflictorder\cup\equivalenceorder$.

The \emph{program order relation} $\progorder$ is the union of the program order relations for all processes: $\progorder\ :=\ \bigcup_{\rank\in\ranks}\progorder^{\rank}$.
Relation $\progorder^{\rank}$ gives the order in which events were issued in process $\rank$.
Formally, let $\tau'$ be the subsequence of all events $\event$ in $\tau$ such that $\rankOf{\event}=\rank$ and $\kindOf{\event}\not\in\set{\popakind,\popbkind}$.
Then $\progorder^{\rank}\ :=\succorder{\tau'}$.

The \emph{conflict relation} $\conflictorder$ orders conflicting accesses to the same address.
Let $\tau=\alpha\cdot\event_1\cdot\beta\cdot\event_2\cdot\gamma$, where $\event_1$ and $\event_2$ access the same address, and at least one of them is a write: $\addrOf{\event_1}=\addrOf{\event_2}=(\rank,\anaddr)$, $\kindOf{\event_1}\in\set{\storekind,\popbkind}$ or $\kindOf{\event_2}\in\set{\storekind,\popbkind}$.
If there is no $\event\in\beta$ such that $\addrOf{\event}=(\rank,\anaddr)$ and $\kindOf{\event}\in\set{\storekind,\popbkind}$, then $\event_1\conflictorder\event_2$.

The \emph{identity relation} $\equivalenceorder$ identifies events corresponding to the same command.
Let $\event$ be a remote read or write event, $\kindOf{\event}\in\set{\readkind,\writekind}$, and $\event_1$ and $\event_2$ be the corresponding requests, $\kindOf{\event_1}=\popakind$ and $\kindOf{\event_2}=\popbkind$.
Then we have $\event\equivalenceorder\event_1\equivalenceorder\event_2$.
In a similar way, $\equivalenceorder$ identifies matching barrier events in different processes.

We say a computation $\tau$ is \emph{violating} if the associated happens-before relation contains a non-trivial cycle, i.e., a cycle that is not included in $\equivalenceorder$.
Violating computations violate sequential consistency.
The robustness problem amounts to proving the absence of violations.
\begin{quote}
{\bf ROB}\hspace{0.45cm} Given a program $(\program, \nodecount)$, show that no computation $\tau\in\computationsOf{\program, \nodecount}$ is violating.
\end{quote}

\begin{example}\label{Example:onetooneTrace}
The happens-before relation of computation $\tau_{\onetoone}$ is depicted in Figure~\ref{Figure:onetooneTrace}.
It is cyclic, therefore $\tau_{\onetoone}$ is violating and $(\onetoone, 2)$ is not robust.
Indeed, no sequentially consistent execution of \onetoone{} allows the 
\texttt{assert} statements to load the initial value of $y$.
\end{example}
Our main result is the following.

\begin{theorem}\label{th:main}
{\bf ROB} is $\pspace$-complete.
\end{theorem}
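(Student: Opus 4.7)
The plan is to prove both directions of Theorem~\ref{th:main}. For $\pspace$-hardness I would reduce from the standard reachability problem in a synchronized product of polynomially many finite-state processes (equivalently, reachability in a linear-bounded automaton). Given a reachability instance, I would construct a PGAS program that first simulates the instance using only local operations and $\barrierkind$ synchronizations, and then, upon reaching the target configuration, executes a short gadget of four remote accesses that inevitably closes a non-trivial happens-before cycle. Robustness fails iff the target configuration is reachable.

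For the upper bound I would follow the two-step strategy advertised in the introduction. The first step is a \emph{normal-form lemma}: if $\computationsOf{\program,\nodecount}$ contains any violating computation $\tau$, then it contains one $\tau'$ of the following shape. The events are partitioned into an \emph{issuing} prefix containing all the local steps and all $\writekind$/$\readkind$ events, followed by a sequence of \emph{completion phases} in which the corresponding $\popakind$ and then $\popbkind$ events fire in the same order in which they were originally issued into each queue. I would prove this by a Mazurkiewicz-style exchange argument: adjacent events that are independent (different processes, different addresses, not linked by $\equivalenceorder$ or by FIFO constraints on a shared queue) may be swapped, and such swaps can be applied greedily to push all completions to the right of all issues while keeping one happens-before cycle intact. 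Care is needed to preserve executability (memory and queue contents must remain consistent after each swap) and to preserve a \emph{non-trivial} cycle, i.e.\ one not contained in $\equivalenceorder$; barrier events, which synchronize globally, have to be treated as immovable landmarks.

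The second step is the automata-theoretic decision procedure. With the normal form in hand, the FIFO matching between issues and completions is already encoded positionally, so I would build a \emph{multiheaded} finite automaton that reads the candidate computation on a single tape with one head per phase, each head lagging the previous one by exactly a phase boundary. Because the heads advance in lockstep within a phase and the normal form guarantees the same order of events in each completion phase as in the issuing phase, the automaton can simulate $\eventautomaton(\program,\nodecount)$ without ever materialising the unbounded queues $\queueconfa,\queueconfb$: the alignment of the heads is the queue. On top of the simulation the automaton nondeterministically guesses a candidate happens-before cycle edge by edge, storing only the current and starting events of the cycle. The state of each head is polynomial (control states of all $\nodecount$ processes plus the finite memory and register valuations), and emptiness of such a multiheaded device is solvable by on-the-fly search in $\npspace=\pspace$.

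The main obstacle I expect is the normal-form lemma. The combinatorial rearrangement must respect four intertwined constraints simultaneously — semantic validity under the rules of Table~\ref{Table:EventAutomatonRules}, per-queue FIFO order on $\popakind$/$\popbkind$ events, global synchronization at $\barrierkind$ events, and preservation of at least one non-trivial $\happensbeforeOf{}$-cycle — and it is here that the bulk of the combinatorial work lies. Once the normal form is established, the multiheaded construction and its $\pspace$ emptiness check are essentially bookkeeping over a polynomially sized state space.
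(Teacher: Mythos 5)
Your overall strategy (hardness via reachability plus a cycle gadget, upper bound via a normal form plus a multiheaded-automaton emptiness check in $\pspace$) matches the paper, and the lower-bound sketch is fine. The genuine gap is in the normal-form lemma, which you yourself identify as the crux but whose statement and proof plan both break down. Your proposed form --- \emph{all} $\popakind$/$\popbkind$ events pushed to the right of \emph{all} issuing and local events --- is too strong: a delayed $\popbkind$ is a write, and if a later local load (or \texttt{assume}) reads the value it deposits, the two events conflict and are \emph{not} independent, so no Mazurkiewicz exchange can move the completion past it; worse, a computation of that shape with the same happens-before relation need not exist in $\computationsOf{\program,\nodecount}$ at all, since reordering conflicting accesses changes both feasibility and $\happensbeforeOf{}$. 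The paper's normal form (Definition~\ref{Definition:WellShapedComputation}) deliberately leaves non-delayed pops inside the first part $\tau_1$ and is \emph{not} obtained by swapping: one takes a length-minimal violation, deletes the last non-pop event (Lemma~\ref{Lemma:Cancellation}), observes the remainder is hb-acyclic by minimality, linearizes it into a sequentially consistent computation via Shasha--Snir, and reinserts the deleted event and its two pops (Lemma~\ref{Lemma:Reinsertion}); this is also what fixes the number of delayed parts at exactly \emph{four}. Your ``sequence of completion phases'' with ``one head per phase'' leaves the number of heads unbounded, and your ordering requirement is only per-queue FIFO, whereas the crucial condition \eqref{Equation:WellShaped} is a \emph{global} order constraint across processes and queues within each delayed part --- precisely what lets the automaton emit the pops in one shot at issue time and dispense with the queues.

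A second, smaller gap is the cycle check: guessing the happens-before cycle ``edge by edge, storing only the current and starting events'' does not work in a single left-to-right pass, because consecutive cycle events need not occur in the computation in cycle order (compare Figure~\ref{Figure:onetooneCycleOnComputation}). The paper first proves that a violation always admits a cycle entering and leaving each process at most once (Lemma~\ref{Lemma:EachRankContributesOnce}), then guesses only the entry/exit events per process via markings and discharges the conflict/identity edges between consecutive ranks by intersection with a regular automaton per cycle type. Without the normal-form existence argument and the bounded-cycle-shape lemma, neither direction of your reduction to multiheaded emptiness goes through, so the $\pspace$ upper bound is not yet established by your proposal.
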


The $\pspace$ lower bound follows from $\pspace$-hardness of control state reachability in sequentially consistent programs~\cite{Kozen77}.
To reduce to robustness, we add an artificial happens-before cycle starting in the target control state.
The rest of the paper shows a $\pspace$ algorithm, and hence upper bound, for the problem.

\section{Normal-Form Violations}\label{Section:WellShapedComputations}
We show that a PGAS program is not robust if and only if it has a violating computation of the following normal form.
\begin{definition}\label{Definition:WellShapedComputation}
Computation $\tau=\tau_1\cdot\tau_2\cdot\tau_3\cdot\tau_4\in\computationsOf{\program, \nodecount}$ is \emph{in normal form} if all $\event\in\tau_2\cdot\tau_3\cdot\tau_4$ satisfy $\kindOf{\event}\in\set{\popakind,\popbkind}$ and for all $a,b\in\tau_1$ 
with $\kindOf{a}, \kindOf{b}\notin\set{\popakind,\popbkind}$
and all $a',b'\in\tau_i$ with $i\in\parts$ we have:
\begin{align}
 a\before{\tau_1}b, a\not \equivalenceorder^* b, a\equivalenceorder^* a', b\equivalenceorder^* b'\quad \text{implies}\quad a'\before{\tau_i}b'.\tag{NF}\label{Equation:WellShaped}
\end{align}
\end{definition}
We explain the normal-form requirement~\eqref{Equation:WellShaped}.
Consider two accesses $a$ and $b$ to remote processes that can be found in the first part of the computation $\tau_1$.
Assume corresponding pop events $a'$ and $b'$ are delayed and can both be found in a later part of the computation, say $\tau_2$.
Then the ordering of $a'$ and $b'$ in $\tau_2$ coincides with the order of $a$ and $b$ in $\tau_1$. Computation $\tau_{\onetoone}$ is not in normal-form whereas $\tau\doubleprime_{\onetoone}$ in Figure~\ref{Figure:onetooneCycleOnComputation} is.
The following theorem guarantees that, in case of non-robustness, normal-form violations always exist.
\begin{theorem}\label{Theorem:OnlyNeedWellShapedComputations}
A PGAS program $(\program, N)$ is robust iff it has no normal-form violation.
\end{theorem}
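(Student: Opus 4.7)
The direction ``robust $\Rightarrow$ no normal-form violation'' is immediate, since every normal-form violation is by definition a violation. All the work lies in the converse. My plan is, given an arbitrary violating computation $\tau \in \computationsOf{\program, N}$, to transform it into a violating computation that is in normal form.

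I would start with a violating $\tau$ carrying a minimal (under inclusion) happens-before cycle $C$, and rewrite $\tau$ by a sequence of local swaps of adjacent events. The technical tool would be a \emph{commutation lemma}: two adjacent events $a\cdot b$ in a valid computation can be swapped to obtain another valid computation in $\computationsOf{\program, N}$, provided they are not related by any of $\progorder$, $\conflictorder$, $\equivalenceorder$, and their underlying semantic transitions are independent (different ranks or registers, distinct queues, and not a common barrier). Moreover, such a swap preserves $C$ whenever the edge between $a$ and $b$ is not itself used by $C$.

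Using this, I would proceed in three passes. First, iteratively commute every pop event ($\popakind$ or $\popbkind$) as far to the right as possible; after saturation the computation takes the shape $\tau_1\cdot\rho$, where $\tau_1$ contains all non-pop events together with those pops that cannot be pushed further, and $\rho$ consists only of pops whose delay is forced by the cycle. Second, partition $\rho$ into contiguous segments $\tau_2\cdot\tau_3\cdot\tau_4$ according to the role each pop plays in $C$: whether it is the target of a $\conflictorder$-edge of $C$, the source of one, or neither. Third, within each segment re-sort the pops so that their order agrees with the order of their $\equivalenceorder$-partners in $\tau_1$; this is possible because two pops in the same segment access addresses or queues that impose no relative-order constraint, and any constraint present is already consistent with the $\tau_1$ order by the minimality of $C$. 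Property~\eqref{Equation:WellShaped} then holds, and the cycle $C$ is preserved throughout.

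The main obstacle is justifying the three-phase bound in the second pass. This reduces to a combinatorial analysis of the minimal cycle $C$: since $C$ alternates between ``intra-segment'' edges ($\progorder$ and $\equivalenceorder$ staying inside $\tau_1$ or inside $\rho$) and $\conflictorder$ edges that may cross the $\tau_1/\rho$ boundary, bounding the number of such crossings by a small constant bounds the number of post-$\tau_1$ phases. A secondary subtlety is the treatment of $\barrierkind$ events, which synchronise all ranks simultaneously and cannot be swapped past outstanding pops. I expect to keep every barrier inside $\tau_1$ and to argue that any happens-before cycle touching a barrier can be rerouted around it without forcing additional phases, using the $\equivalenceorder$-identification of matching barriers across ranks.
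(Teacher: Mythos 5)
Your overall plan is genuinely different from the paper's, and its central step has a gap that the sketch does not close. The load-bearing claim is that a minimal cycle $C$ crosses the boundary between $\tau_1$ and the pop suffix $\rho$ only a constant number of times, and that this bounds the number of delayed phases by three. Neither half holds as stated: a cycle can enter and leave the delayed region once or twice per participating process (each process may contribute a delayed $\popakind$/$\popbkind$ both as the event the cycle enters through and as the event it leaves through), so the number of crossings grows with the number of processes, not with a universal constant; and even granting a bound on crossings, you give no argument converting it into a bound on phases. The companion claim that within-segment ordering constraints are ``already consistent with the $\tau_1$ order by the minimality of $C$'' is exactly the point that needs proof: minimality of the cycle controls neither the conflict edges nor the per-queue FIFO interactions among delayed pops, and your role-based partition (target of a $C$-edge / source / neither) is not even well defined when one pop plays both roles, nor do you fix an order of the three segments that simultaneously respects FIFO, the $\popakind$-before-$\popbkind$ constraint across segments, and the direction of every $C$-edge. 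Your commutation machinery itself is sound (legal swaps preserve the happens-before relation, and in a pop-only suffix values cannot invalidate the run), but it never produces the global witness order that condition~\eqref{Equation:WellShaped} requires.

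The paper obtains that witness by a different mechanism, which is the idea missing from your proposal: take a violating computation of \emph{minimal length}, delete the last non-pop event $\event$ together with its $\equivalenceorder$-partners (Lemma~\ref{Lemma:Cancellation}); by length-minimality the remainder is not violating, hence happens-before acyclic, so by Shasha--Snir it is equivalent to a sequentially consistent computation $\sigma$ in which every command's events are contiguous. Projecting $\sigma$ onto the four blocks of the original computation delimited by $\event$ and its two pops, and reinserting $\event,\event',\event''$ at the block boundaries, yields a computation with the \emph{same} happens-before relation as the original violation (Lemma~\ref{Lemma:Reinsertion}), so validity, the cycle, and~\eqref{Equation:WellShaped} all come for free from $\sigma$; the bound of three delayed parts is simply the number of gaps created by the one cancelled command, not a count of cycle crossings. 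If you want to salvage your swap-based route, you would need some analogous source of a globally consistent order for the delayed pops (e.g., an acyclicity argument after removing part of the cycle); as written, the three-phase claim is unsupported and its stated justification is false.
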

Phrased differently, to decide robustness  our procedure should look for
normal-form violations.
The remainder of the section is devoted to proving Theorem~\ref{Theorem:OnlyNeedWellShapedComputations}.
We make use of the following property of PGAS programs:
every computation contains an event that can be deleted, in the sense that the result is again a computation of the program.
\begin{lemma}[Cancellation]\label{Lemma:Cancellation}
Consider $\emptysequence\neq \tau\in\computationsOf{\program,\nodecount}$.
There is an event $\event\in\tau$ so that $\tau\setminus\event\in \computationsOf{\program,\nodecount}$.
Computation $\tau\setminus\event$ is defined to remove $\event$ and all $\equivalenceorder$-related events from $\tau$.
\end{lemma}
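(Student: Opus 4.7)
The plan is to identify, for a nonempty computation $\tau \in \computationsOf{\program, \nodecount}$, the last \emph{issuing} event $\event$ in $\tau$---by which I mean an event with $\kindOf{\event} \notin \set{\popakind,\popbkind}$---and to argue that $\tau \setminus \event$ is again a computation. Since $\tau$ ends with all queues empty and pops can only consume items that issues enqueue, $\tau$ must contain at least one issue, and all events strictly after $\event$ are pops.

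I would then split into three cases on $\kindOf{\event}$. In the local case, $\kindOf{\event} \in \set{\loadkind,\storekind,\assignkind,\assumekind}$, the event has no $\equivalenceorder$-partner, so $\tau \setminus \event$ just drops $\event$. Simulating this shorter sequence from the initial state diverges from simulating $\tau$ only in that $\rankOf{\event}$ stays one command short, and possibly in the value of one memory cell; neither affects any subsequent event, because the remaining events are pops that have no memory-valued guards, and no later issue demands the missing control step. The queues are untouched by $\event$, so they still end empty. In the barrier case, rule $(\barrierkind)$ emits $N$ consecutive barrier events atomically, and $\equivalenceorder$ ties them together, so $\tau \setminus \event$ deletes the entire block; because $\event$ is the last issue, this block is a tail of the issues, and the surviving pops neither check nor need any control state to advance.

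The main obstacle is the remote case $\kindOf{\event} \in \set{\readkind,\writekind}$. Let $e_2, e_3$ be its $\popakind$ and $\popbkind$ partners. The semantics forces $\event \before{\tau} e_2 \before{\tau} e_3$, and all three operate on the queues $\queueconfa(\rankOf{\event}, q)$ and $\queueconfb(\rankOf{\event}, q)$ for $\event$'s queue id $q$. The crucial observation I would establish is a FIFO matching: because no issue follows $\event$, the request $\event$ enqueues is the very last item ever placed into $\queueconfa(\rankOf{\event}, q)$, hence FIFO forces it to be precisely the item popped by $e_2$, and symmetrically for $\queueconfb$ and $e_3$. Consequently, every popa or popb firing between $\event$ and $e_2$, or between $e_2$ and $e_3$, consumes an entry that was already in the queue before $\event$ fired; those transitions remain enabled once the whole triple is stripped out.

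With this in hand, the remaining bookkeeping is routine: control states are unaffected except that $\rankOf{\event}$ does not move past its read/write command, which is harmless because no later issue demands it; memory may differ because $e_3$'s destination write is skipped, but pops have no memory guards; and the queues end empty because the only items and pops removed are the matched triple attached to $\event$. I expect all the real difficulty to live in the FIFO argument; the local and barrier cases reduce to direct simulation.
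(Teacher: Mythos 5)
Your proposal is correct and follows essentially the same route as the paper: pick the last event $\event$ with $\kindOf{\event}\notin\set{\popakind,\popbkind}$, note that only pop events follow it and that no $\progorder$-successor of $\event$ remains, and conclude that deleting $\event$ together with its $\equivalenceorder$-class leaves a valid computation. Your FIFO-matching argument in the remote case just spells out in detail what the paper compresses into "all events to the right of $\event$ are unconditionally executable," since $\event$'s request sits at the tail of its queue and its pops are the last on that queue.
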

\begin{proof}
Take as $\event$ the last event in $\tau$ with $\kindOf{\event}\not\in\set{\popakind,\popbkind}$.
All events to the right of $\event$ are unconditionally executable.
Moreover, $\tau$ does not have $\progorder$-successors following $\event$.
Therefore, the resulting computation $\tau\setminus\event$ is in $\computationsOf{\program, \nodecount}$.
\end{proof}
A PGAS program is not robust if and only if it has a violating computation $\tau$ of minimal length.
Let $\event\in\tau$ be the event determined by Lemma~\ref{Lemma:Cancellation}.
If $\kindOf{\event}\not\in\set{\readkind,\writekind}$, then $\tau=\tau_1\cdot\event\cdot\tau_2$.
Otherwise $\tau=\tau_1\cdot\event\cdot\tau_2\cdot\event'\cdot\tau_3\cdot\event''\cdot\tau_4$ with $\event\equivalenceorder\event'\equivalenceorder\event''$.
Consider the latter case where $\tau\setminus\event =\tau_1\cdot\tau_2\cdot\tau_3\cdot\tau_4$.
Since $\length{\tau\setminus\event}<\length{\tau}$, the new computation is not violating and $\happensbeforeOf{\tau\setminus\event}$ is acyclic.
This acyclicity guarantees we find a computation $\sigma\in\events^*$ with the same happens-before relation as $\tau\setminus\event$ and where pop events directly follow their remote accesses.
Intuitively, $\sigma$ is a sequentially consistent computation corresponding to $\tau\setminus\event$.
\begin{lemma}[\cite{ShashaSnir88}]
There is $\sigma\in\computationsOf{\program, \nodecount}$ with $\happensbeforeOf{\sigma}=\ \happensbeforeOf{\tau\setminus\event}$ and  $\sigma=\sigma_1\cdot \event_1\ldots\event_n\cdot\sigma_2$ for all $\event_1\equivalenceorder\ldots \equivalenceorder \event_n$.
\end{lemma}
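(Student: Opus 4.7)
My plan is to exploit the fact that $\tau\setminus\event$ is strictly shorter than the minimal violating computation $\tau$, so $\happensbeforeOf{\tau\setminus\event}$ has no non-trivial cycle. I will produce $\sigma$ by choosing a linear extension of $\happensbeforeOf{\tau\setminus\event}$ that places each $\equivalenceorder$-class contiguously, and then verify that this extension is in fact a computation of $(\program,\nodecount)$ realising the same happens-before relation.

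The first step is to pass to the quotient. Let $\mathcal{C}$ denote the set of $\equivalenceorder$-classes of events appearing in $\tau\setminus\event$, and define a binary relation $\Rightarrow$ on $\mathcal{C}$ by $[a]\Rightarrow[b]$ whenever there exist representatives $a'\equivalenceorder^* a$, $b'\equivalenceorder^* b$ with $a'\,(\progorder\cup\conflictorder)\,b'$ and $[a]\neq[b]$. I claim $\Rightarrow$ is acyclic: any cycle $[c_1]\Rightarrow[c_2]\Rightarrow\cdots\Rightarrow[c_k]\Rightarrow[c_1]$ can be lifted, using one $\equivalenceorder$-step per class to connect the $\progorder\cup\conflictorder$-edges, to a cycle of $\happensbeforeOf{\tau\setminus\event}$ that uses at least one edge outside $\equivalenceorder$, contradicting the fact that $\tau\setminus\event$ is not violating.

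Next, I choose any topological ordering $[c_1],[c_2],\ldots,[c_m]$ of $\mathcal{C}$ with respect to $\Rightarrow$, and for each class $[c_i]$ fix a canonical internal ordering: for a remote-access class I take the sequence $\readkind/\writekind$, $\popakind$, $\popbkind$, and for a barrier class I take the rank ordering $(\barrierkind,1,\bottom),\ldots,(\barrierkind,\nodecount,\bottom)$; all other classes are singletons. Concatenating these internal sequences in the topological order yields a candidate string $\sigma\in\events^*$ in which every $\equivalenceorder$-class occurs contiguously and in the prescribed internal order, so the final form condition $\sigma=\sigma_1\cdot e_1\cdots e_n\cdot\sigma_2$ is automatic for every $\equivalenceorder$-chain $e_1\equivalenceorder\cdots\equivalenceorder e_n$.

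The remaining and most delicate step is executability: I must show $\sigma\in\computationsOf{\program,\nodecount}$ and that $\happensbeforeOf{\sigma}$ coincides with $\happensbeforeOf{\tau\setminus\event}$. Because $\sigma$ is a linear extension of $\progorder$, the control-state projection onto each process coincides with that of $\tau\setminus\event$, so local assignments, assumes, and remote access issues remain enabled; because the internal order of each remote-access class places the $\popakind$ and $\popbkind$ events after their issuing $\readkind/\writekind$, the queue-pop rules fire on non-empty queues with the correct head; and because $\sigma$ extends $\conflictorder$, every load and every $\popakind$ reads from the same write as in $\tau\setminus\event$, so memory values evaluate identically. Barrier classes appear atomically at a single point of the topological order, which is consistent with Rule~$(\barrierkind)$. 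This is the part where I expect the most bookkeeping: it amounts to showing step-by-step that each local and each queue configuration reached by $\sigma$ enables its next event, appealing essentially to the classical Shasha--Snir argument~\cite{ShashaSnir88} adapted to the two queue configurations of the PGAS semantics. Once executability is established, the three components $\progorder$, $\conflictorder$, and $\equivalenceorder$ are each determined by the same underlying set of events and the same read-from pattern, hence $\happensbeforeOf{\sigma}=\happensbeforeOf{\tau\setminus\event}$, completing the proof.
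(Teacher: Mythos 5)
Your argument is correct, but it is not the route the paper takes: the paper does not prove this lemma at all, it simply invokes Shasha--Snir~\cite{ShashaSnir88}, whereas you give a self-contained proof. Your structure --- quotient the events of $\tau\setminus\event$ by $\equivalenceorder$, observe that the induced relation on classes is acyclic because any cycle would lift (via $\equivalenceorder$-edges inside classes and $\progorder\cup\conflictorder$-edges between distinct classes) to a non-trivial happens-before cycle contradicting minimality of $\tau$, then linearise classes topologically with a fixed internal order and re-execute --- is exactly the classical Shasha--Snir linearisation adapted to the PGAS semantics, and the adaptation is the added value of your write-up: contiguity of each class means every queue holds at most one pending item when its $\popakind$/$\popbkind$ fire, and barrier classes match Rule~($\barrierkind$) verbatim, which is what makes the citation legitimate for this richer model. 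The one step you state a bit too quickly is ``because $\sigma$ extends $\conflictorder$, every load and every $\popakind$ reads from the same write'': being a linear extension of the \emph{immediate} conflict relation does not by itself forbid an unrelated write $w''$ to the same address from landing between a $\conflictorder$-related pair; you need the standard observation that all writes to a given address are totally ordered by $\conflictorder^{+}$ and every read sits between its source write and the next write in that chain, so any such $w''$ is already ordered before the source or after the read in $\happensbeforeOf{\tau\setminus\event}$ and hence cannot be placed in between. With that remark supplied, your executability bookkeeping (control states from $\progorder$, values from the preserved reads-from, queue heads from class contiguity) and the equality of the three component relations go through, so the proof is sound; what the paper's citation buys is brevity, what yours buys is a verification that the classical argument really survives the two-queue, barrier-synchronised semantics.
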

We now use $\sigma$ to rearrange the events in $\tau\setminus\event$ and guarantee the normal-form requirement.
The idea is to project $\sigma$ to the events in $\tau_1$ to $\tau_4$.
Reinserting $\event$ yields a normal-form violation:
\begin{align*}
\tau\doubleprime:=(\projectionOf{\sigma}{\tau_1})\cdot\event\cdot (\projectionOf{\sigma}{\tau_2})\cdot\event'\cdot(\projectionOf{\sigma}{\tau_3})\cdot\event''\cdot(\projectionOf{\sigma}{\tau_4}).
\end{align*}
The following lemma concludes the proof of Theorem~\ref{Theorem:OnlyNeedWellShapedComputations}.
\begin{lemma}[Reinsertion]\label{Lemma:Reinsertion}
$\tau\doubleprime\in\computationsOf{\program, \nodecount}$, $\happensbeforeOf{\tau\doubleprime}=\ \happensbeforeOf{\tau}$, and $\tau\doubleprime$ is in normal form.
\end{lemma}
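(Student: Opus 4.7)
The plan is to verify the three claims by exploiting the key structural property of $\sigma$: since $\sigma$ is sequentially consistent, every $\equivalenceorder$-equivalence class (a remote read/write together with its two pop events, or the $\nodecount$ events of a single barrier) appears as a consecutive block in $\sigma$, and $\happensbeforeOf{\sigma}=\happensbeforeOf{\tau\setminus\event}$ by construction.

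I would first fix a four-part decomposition of $\tau\doubleprime$:
\[
\tau\doubleprime_1 := (\projectionOf{\sigma}{\tau_1})\cdot\event,\ \
\tau\doubleprime_2 := (\projectionOf{\sigma}{\tau_2})\cdot\event',\ \
\tau\doubleprime_3 := (\projectionOf{\sigma}{\tau_3})\cdot\event'',\ \
\tau\doubleprime_4 := \projectionOf{\sigma}{\tau_4}.
\]
Placing $\event'$ and $\event''$ at the \emph{end} of their segments is deliberate: it keeps $\tau\doubleprime_2,\tau\doubleprime_3,\tau\doubleprime_4$ pop-only (the projections inherit only pops from $\tau_2,\tau_3,\tau_4$), and it is what will make the ordering condition~\eqref{Equation:WellShaped} work at the boundary case $b=\event$.

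To establish $\happensbeforeOf{\tau\doubleprime}=\happensbeforeOf{\tau}$, the event multiset is unchanged and I would inspect the three constituent relations in turn. For program order, the non-pop events of each process all lie in $\tau\doubleprime_1$, in $\sigma$'s order, which equals the program order in $\tau\setminus\event$ (and hence in $\tau$) since program order is fixed by hb. For conflict order, a conflicting pair in the same $\tau_i$ retains its $\sigma$-order in $\projectionOf{\sigma}{\tau_i}$, matching $\tau$; a conflicting pair in distinct $\tau_i,\tau_j$ follows the relative ordering of the segments; conflicts involving $\event,\event',\event''$ are settled by their fixed end-of-segment positions. The identity relation $\equivalenceorder$ is positional-agnostic. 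Given hb preservation, membership $\tau\doubleprime\in\computationsOf{\program,\nodecount}$ follows from the standard Shasha--Snir-style principle that any linearization of a valid computation's happens-before order is again a valid computation: reads observe the same writes, barrier blocks stay consecutive, and each read/write keeps its pop events in program order.

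For the normal-form requirement, pop-purity of $\tau\doubleprime_2,\tau\doubleprime_3,\tau\doubleprime_4$ is immediate from the decomposition. For the ordering condition, take non-pops $a\before{\tau\doubleprime_1}b$ with $a\not\equivalenceorder^* b$, and $a',b'\in\tau\doubleprime_i$ with $a\equivalenceorder^* a'$, $b\equivalenceorder^* b'$. If $a,b\in\projectionOf{\sigma}{\tau_1}$, then $a\before{\sigma}b$, and the consecutive-block property of $\sigma$ lifts this to $a'\before{\sigma}b'$, which survives the projection. The case $b=\event$ is where end-of-segment placement pays off: $b'\in\{\event',\event''\}$ is the last element of its segment, so any $a'$ in the same segment automatically precedes it; the case $a=\event$ cannot occur because $\event$ is the final event of $\tau\doubleprime_1$. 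The main subtlety I anticipate is precisely this bookkeeping around $\event,\event',\event''$, together with justifying hb preservation for conflict pairs that involve one of these three events.
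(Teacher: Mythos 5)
Your decomposition, the happens-before bookkeeping, and the normal-form check follow essentially the paper's route, and the \eqref{Equation:WellShaped} argument via the consecutive-block property of $\sigma$ is sound. The genuine gap is in the membership claim $\tau\doubleprime\in\computationsOf{\program, \nodecount}$, which is the substantive part of the lemma and which you dispatch by appeal to a ``standard principle that any linearization of a valid computation's happens-before order is again a valid computation.'' That principle does not apply here, and is in fact false in this model. First, $\happensbeforeOf{\tau}$ is cyclic (that is the whole point: $\tau$ is a violation), so $\tau\doubleprime$ is not a linearization of a partial order; the Shasha--Snir lemma used earlier needs acyclicity and only yields existence of \emph{some} computation, it does not certify a \emph{given} reordering. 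Second, and more fundamentally, happens-before does not constrain queue behaviour: $\equivalenceorder$ is symmetric, so hb does not even force a $\popakind$ to occur after its issuing $\writekind$, and it imposes no order between the pop events of two \emph{different} requests in the same queue unless they happen to conflict on an address. So a sequence with the same events and the same hb as a valid computation can perfectly well be inexecutable (wrong element at the head of a queue, or a pop before its request exists).

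What actually makes $\tau\doubleprime$ executable is the specific construction, and this has to be argued directly, as the paper does: take the longest executable prefix of $\tau\doubleprime$ and derive a contradiction for the first non-executable event $\tilde\event$. If $\kindOf{\tilde\event}\in\set{\popakind,\popbkind}$, the failure would be a wrong or empty queue head; this is excluded because per-queue FIFO order is preserved by construction --- within each segment the pops inherit $\sigma$'s order, where identity blocks are consecutive and issuing events are in program order, and across segments the assignment of each pop to its segment is inherited from the executable computation $\tau$ --- together with $\event_1\before{\tau\doubleprime}\event_2\before{\tau\doubleprime}\event_3$ for each identity triple. If $\tilde\event$ is not a pop, the failure could only be a wrong control state (excluded by equality of program orders) or a wrong register/memory value, e.g.\ a failing \texttt{assume} (excluded because equality of the conflict relations means every read receives its value from the same write as in $\tau$). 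Without this argument --- in particular without the FIFO preservation, which is a property of the construction and not of hb --- the lemma is not established.
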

\proofatend
To relieve the reader from the burden of syntax, we consider the case when $\tau\setminus\event =\tau_1\cdot\tau_2$.
We start with the program order.
Let $\event_1,\event_2\in\tau_1$ with $\event_1\progorder\event_2$ in $\tau$ and, consequently, in $\tau\setminus\event$.
By definition of $\sigma$, we have $\event_1\progorder\event_2$ in $\sigma$.
Since $\projectionOf{\sigma}{\tau_1}$ contains $\event_1$ and $\event_2$ and does not add events between them, $\event_1\progorder\event_2$ holds for $\projectionOf{\sigma}{\tau_1}$ and, consequently, $\tau\doubleprime$.
Assume $\event_1\in\tau_1$ and $\event_2\in\tau_2$ with $\event_1\progorder\event_2$ in $\tau$ and in $\tau\setminus\event$.
Then $\event_1$ is the rightmost element in $\tau_1$ with its rank that is different from a pop.
Similarly, $\event_2$ is the leftmost element in $\tau_2$ with its rank and different from a pop.
The same is valid for their positions in $\projectionOf{\sigma}{\tau_1}$ and $\projectionOf{\sigma}{\tau_2}$, which leads to $\event_1\progorder\event_2$ in $\tau\doubleprime$.
The case when $\event_1\in\tau_1$ and $\event_2=\event$ is similar.
Since $\tau$ and $\tau\doubleprime$ consist of the same events, the cardinalities of the respective $\progorder$ relations are equal, and the above inclusion already means the program orders in both computations are equal.

Now we consider the conflict relation.
Let $\event_1,\event_2\in\tau_1$ with $\event_1\conflictorder\event_2$ in $\tau$ and hence in $\tau\setminus\event$.
By definition of $\sigma$, we have $\event_1\conflictorder\event_2$ in $\sigma$.
Since $\projectionOf{\sigma}{\tau_1}$ contains $\event_1$ and $\event_2$ and does not add new actions between them, $\event_1\conflictorder\event_2$ holds for $\projectionOf{\sigma}{\tau_1}$ and, consequently, for $\tau\doubleprime$.

Assume $\event_1,\event_2\in\tau_1$ and $\event_1\not\conflictorder\event_2$ in $\tau$.
One option is that $\event_1$ and $\event_2$ do not access the same address or both are reads.
Then they still will not conflict in $\tau\doubleprime$.
The other option is that $\event_1\conflictorder\event_3$ in $\tau$, where $\event_3$ is a write to $\addrOf{\event_1}=\addrOf{\event_2}$ that is located between $\event_1$ and $\event_2$ in $\tau_1$.
Then, as already proven, $\event_1\conflictorder\event_3$ will hold in $\tau\doubleprime$.
Consequently, $\event_1\conflictorder\event_2$ will not hold in $\tau\doubleprime$.
The case when $\event_1,\event_2\in\tau_2$ is similar.

Assume $\event_1\in\tau_1$, $\event_2\in\tau_2$, and $\event_1\conflictorder\event_2$ in $\tau$.
Then, $\event$ is not a write to $\addrOf{\event_1}=\addrOf{\event_2}$, and $\event_1\conflictorder\event_2$ in $\tau\setminus\event$.
Note that $\projectionOf{\sigma}{\tau_1}$ does not contain a write to $\addrOf{\event_1}$ to the right of $\event_1$.
Otherwise, $\tau_1$ would contain a write $\event_3$ to $\addrOf{\event_1}$, and $\event_1\conflictorder^+\event_3$, which contradicts $\event_1\conflictorder\event_2$ in $\tau$.
With a similar argument, $\projectionOf{\sigma}{\tau_2}$ does not contain a write to $\addrOf{\event_1}$ to the left of $\event_2$.
Therefore, $\event_1\conflictorder\event_2$ in $\projectionOf{\sigma}{\tau_1}\cdot\event\cdot\projectionOf{\sigma}{\tau_2}$.

Assume $\event_1\in\tau_1$, $\event_2\in\tau_2$, and $\event_1\not\conflictorder\event_2$ in $\tau$.
The proof of $\event_1\not\conflictorder\event_2$ in $\tau\doubleprime$ is as in the case when $\event_1,\event_2\in\tau_1$.

The case when $\event_1=\event$ or $\event_2=\event$ is no harder.

The formal definition of the identity relation takes a computation $\alpha$ and determines the three projections
$\projectionOf{\alpha}{\{\writekind, \readkind\}}$, $\projectionOf{\alpha}{\popakind}$, and $\projectionOf{\alpha}{\popbkind}$.
The identity relation then relates the $i$th elements in these projections.
To show that the identity relations in $\tau$ and $\tau\doubleprime$ coincide, one shows that the three projections coincide --- using the same technique as for the program order.
Therefore, the identity relations of both computations match.
Also note that for each read or write event sequence $\event_1\equivalenceorder\event_2\equivalenceorder\event_3$, we have $\event_1\before{\tau\doubleprime}\event_2\before{\tau\doubleprime}\event_3$.
This holds by the fact that $\event_1\before{\tau}\event_2\before{\tau}\event_3$, and the fact that $\sigma=\sigma_1\cdot \event_2\cdot \event_2\cdot\event_3\cdot \sigma_2$ for some $\sigma_1$ and $\sigma_2$.

To prove that $\tau\doubleprime\in\computationsOf{\program, \nodecount}$, we proceed by contradiction.
Let $\alpha\neq\tau\doubleprime$ be the longest prefix of $\tau\doubleprime$ so that $\initialeventstate\transitionto{\alpha}\state$ for some state $\state$.
Then $\tau\doubleprime=\alpha\cdot\tilde\event\cdot\beta$ with $\initialeventstate\transitionto{\alpha}\state$ and $\state\not\transitionto{\tilde\event}$.
Let $\state=(\stateconf,\memoryconf,\queueconfa,\queueconfb)$.
If $\kindOf{\tilde\event}\in\set{\popakind,\popbkind}$, then $\state\not\transitionto{\tilde\event}$ means that the respective queue $\queueconfa$ or $\queueconfb$ contains an incorrect topmost element or is empty in $\state$.
But this contradicts to $\event_1\before{\tau\doubleprime}\event_2\before{\tau\doubleprime}\event_3$ and equality of identity relations established above.
If $\kindOf{\tilde\event}\not\in\set{\popakind,\popbkind}$, then $\state\not\transitionto{\tilde\event}$ may hold because the transition
$\controlstate_1\transitionto{\command}\controlstate_2$ of $\tilde\event$ requires a different source state, $\controlstate_1\neq\stateconf(\rankOf{\tilde\event})$.
But since $\stateconf(\rankOf{\tilde\event})$ is unambiguously determined by the $\instructionOf{}$ of $\progorder$-predecessor of $\tilde\event$, which is the same in $\tau\doubleprime$ and in $\tau$ due to the matching program-order relations, this is not the case.
The last opportunity why $\state\not\transitionto{\tilde\event}$ may hold is because the transition producing $\tilde\event$ reads different values from registers or memory, e.g. $\tilde \event$ is an assertion $\theassert{\anexpr}$ and $\valueOf{\anexpr}=0$ in $\state$.
But since $\tau\doubleprime$ consists of the same events as $\tau$, has the same program and conflict relations (i.e. reads receive values from the same writes in both computations), and $\tau\in\computationsOf{\program}$, this cannot be the case.

Finally, $\tau\doubleprime$ is in normal-form. The condition on the shape of $\tau\doubleprime$ is immediate, \eqref{Equation:WellShaped} holds by the definitions of $\tau\doubleprime$ and $\sigma$.
\endproofatend
\begin{example}\label{Example:onetooneReshuffle}
Computation $\tau_{\onetoone}$ in Example~\ref{Example:onetooneComputation} is a shortest violation.
The event determined by Lemma~\ref{Lemma:Cancellation} is $\event=\nodeone{\loadkind}$.
Therefore, $\tau\setminus\event = \tau_1\cdot\tau_2$ with
\begin{align*}
\tau_1=\nodeone{\writekind}\cdot\nodetwo{\writekind}\cdot\nodetwo{\popakind}\cdot\nodeone{\popakind}\cdot\nodeone{\barrierkind}\cdot\nodetwo{\barrierkind}
\qquad \text{and}\qquad
\tau_2=\nodetwo{\popbkind}\cdot\nodeone{\popbkind}.
\end{align*}
A sequentially consistent computation corresponding to $\tau\setminus\event$ is
\begin{align*}
\sigma=\nodeone{\writekind}\cdot\nodeone{\popakind}\cdot\nodeone{\popbkind}\cdot\nodetwo{\writekind}\cdot\nodetwo{\popakind}\cdot\nodetwo{\popbkind}\cdot\nodeone{\barrierkind}\cdot\nodetwo{\barrierkind}.
\end{align*}
The normal-form violation $\tau_{\onetoone}\doubleprime$ is depicted in
Figure~\ref{Figure:onetooneCycleOnComputation}.
Note that $\tau_{\onetoone}\doubleprime$ is indeed in $\computationsOf{\onetoone, 2}$.
Moreover, $\nodeone{\popakind}$ and $\nodetwo{\popakind}$ immediately follow
$\nodeone{\writekind}$ and $\nodetwo{\writekind}$, respectively.
Similarly, the $\nodeone{\popbkind}$ and $\nodetwo{\popbkind}$ events in the second
part of the computation respect the order of $\nodeone{\writekind}$ and
$\nodetwo{\writekind}$ in the first part of the computation.
This means, \eqref{Equation:WellShaped} holds.
\end{example}
\begin{figure}
\centering 
	\begin{tikzpicture}[node distance=1.5cm, auto]
    \node (write0) {$\nodeone{\writekind}$}; %&
    \node [minimum width=0.85cm,minimum height=0.35cm, right of=write0] (popa0) {};
    \node [node distance=0.06cm,below of=popa0] (popa0') {$\nodeone{\popakind}$}; %&
    \node [right of=popa0] (write1) {$\nodetwo{\writekind}$}; %&
    \node [minimum width=0.85cm,minimum height=0.35cm,right of=write1] (popa1) {};
    \node [node distance=0.06cm,below of=popa1] (popa1') {$\nodetwo{\popakind}$}; %&
    \node [right of=popa1] (barrier0) {$\nodeone{\barrierkind}$}; %&
    \node [right of=barrier0] (barrier1) {$\nodetwo{\barrierkind}$};% &
    \node [right of=barrier1] (load0) {$\nodeone{\loadkind}$}; %&
    \node [minimum width=0.85cm,minimum height=0.35cm,right of=load0] (popb0) {};
    \node [node distance=0.06cm,below of=popb0] (popb0') {$\nodeone{\popbkind}$}; %&
    \node [right of=popb0'] (popb1) {$\nodetwo{\popbkind}$};
 %   \node [node distance=0.06cm,below of=popb1] (popb1') {$\nodetwo{\popbkind}$};%\\

  \draw[<->] (write0)	edge node [midway,above] {$\scriptstyle\equivalence$} (popa0);
  \draw[<->] (popa0)   edge[bend left=13] node [very near end,above] {$\scriptstyle\equivalence$} (popb0);
  \draw[->]  (write0)   	edge[bend right=15] node [midway,below] {$\scriptstyle po$} (barrier0);
  \draw[->]  (barrier0) 	edge[bend right=26] node [midway,below] {$\scriptstyle po$} (load0);
  \draw[<->] (barrier0)	edge node [midway,above] {$\scriptstyle\equivalence$} (barrier1);

  \draw[<->] (write1)  	edge node [midway,below] {$\scriptstyle\equivalence$} (popa1);
  \draw[<->] (popa1)  	edge[bend left=15] node [midway,above] {$\scriptstyle\equivalence$} (popb1);
  \draw[->]  (write1)  	edge[bend right=19] node [pos=0.50,below] {$\scriptstyle po$} (barrier1);

  \draw[->]  (load0)   	edge[bend right=22] node [midway,below] {$\scriptstyle\cf$} (popb1);

  \draw[-] (8.35,-0.75) edge (8.35,0.75);
  \draw[-] (9.65,-0.75) edge (9.65,0.75);
  \end{tikzpicture}
\caption{Normal-form violation $\tau\doubleprime_{\onetoone}$ from Example~\ref{Example:onetooneReshuffle}.
The edges indicate the dependencies in the computation and coincide with the relations in Figure~\ref{Figure:onetooneTrace}.}
\label{Figure:onetooneCycleOnComputation}
\end{figure}

\section{From Normal-Form Violations to Language Emptiness}\label{Section:GeneratingViolatingWellShapedComputations}

We now reduce checking the absence of normal-form violations to the emptiness problem in
a suitable automaton model.
We introduce multiheaded automata and construct, for each program $(\program, \nodecount)$,
a multiheaded automaton accepting all normal-form computations.
To verify robustness, we check that the intersection of this automaton with regular
languages accepting cyclic happens-before relations is empty.

\subsection{Multiheaded Automata}\label{Subsection:MultiheadedAutomata}

Multiheaded automata are an extension of finite automata.
Intuitively, instead of generating just a single computation, they generate several computations in one pass, each by a separate head.
The language of the multiheaded automaton then consists of the concatenations of the computations generated by each head.

Syntactically, an \emph{$n$-headed finite automaton over alphabet $\Sigma$} is a finite automaton that uses the extended alphabet $\overline{1,n}\times\alphabet$.
So we have $\automaton=(\states, (\overline{1, n}\times \alphabet), \transitions, \initialstate, \finalstates)$.
The semantics, however, is different from finite automata.
Given $\sigma\in (\overline{1,n}\times\alphabet)^*$, we use $\projectionOf{\sigma}{k}$ to project $\sigma$ to the letters $(k, a)$, and afterwards cut away the index $k$.
So $\projectionOf{((1, a)\cdot(2, b)\cdot(1, c))}{1}=a\cdot c$.
With this, the \emph{language of $A$} is 
$	\langOf{\automaton}:=\setcond{\computOf{\sigma}}{\initialstate\transitionto{\sigma}\state\text{ for some }\state\in\finalstates}\quad\text{where}\quad \computOf{\sigma}:=\projectionOf{\sigma}{1}\cdots\projectionOf{\sigma}{n}$.

Multiheaded automata are closed under regular intersection, and
emptiness is decidable in non-deterministic logarithmic space.
Indeed, checking emptiness reduces to finding a path
from an initial to a final node in a directed graph.
\begin{lemma}\label{Lemma:MultiheadedIntersection}
Consider an $n$-headed automaton $U$ and a finite automaton $V$ over a common alphabet $\alphabet$.
There is an $n$-headed automaton $W$ with $\langOf{W}=\langOf{U}\cap\langOf{V}$.
\end{lemma}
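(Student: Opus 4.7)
The plan is a product construction between $U$ and $V$, with the twist that $W$ must simulate $V$ on the concatenation $\computOf{\sigma}=\projectionOf{\sigma}{1}\cdots\projectionOf{\sigma}{n}$, even though $U$ generates the letters for different heads in an arbitrary interleaving. The core idea is that $W$ guesses non-deterministically, at the very beginning of its run, a sequence of $V$-states $q_0, q_1, \ldots, q_n$ that will occur at the head boundaries inside $\computOf{\sigma}$, with $q_0$ the initial state of $V$ and $q_n$ a final state of $V$. It then simulates $V$ in parallel for every head, using head $k$ to drive $V$ from $q_{k-1}$ to $q_k$.

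Concretely, a state of $W$ is a tuple $(s, (p_1, \ldots, p_n), (q_0, \ldots, q_n))$, where $s$ is a state of $U$ and $p_k$ is the $V$-state reached so far by simulating $V$ on the letters of head $k$ that $W$ has already seen. The boundary-state guess is fixed for the entire run. On input $(k, a)\in\overline{1,n}\times\Sigma$, automaton $W$ performs a $U$-transition labelled $(k, a)$ and simultaneously replaces $p_k$ by some $p_k'$ with $p_k\transitionto{a}p_k'$ in $V$. An initial state of $W$ has $s$ initial in $U$, $p_k = q_{k-1}$ for every $k$, and $q_0, q_n$ as above; a final state additionally has $s$ final in $U$ and $p_k = q_k$ for every $k$, enforcing that each head's $V$-simulation has actually arrived at its guessed end boundary.

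Correctness then reduces to two routine verifications. For soundness, an accepting run of $W$ on some $\sigma$ immediately yields an accepting run of $U$ on $\sigma$, while the $V$-simulations of heads $1, \ldots, n$ concatenate into a single accepting run of $V$ on $\computOf{\sigma}$ thanks to the boundary matching $p_k = q_k$. For completeness, given $w\in\langOf{U}\cap\langOf{V}$, we pick any $\sigma$ accepted by $U$ with $\computOf{\sigma}=w$ and let $q_k$ be the state of $V$ reached after consuming the prefix $\projectionOf{\sigma}{1}\cdots\projectionOf{\sigma}{k}$ of $w$; with this guess fixed at start-up, $W$ mirrors the runs of $U$ and $V$ and accepts $\sigma$. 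The only real obstacle is the asynchrony between $U$'s head-interleaved output and $V$'s sequential consumption, which is precisely what the upfront guess of boundary states resolves, at the cost of an $|V|^{n+1}$-factor blow-up in the state space of $W$.
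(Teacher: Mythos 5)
Your proposal is correct and matches the paper's own construction: the paper likewise guesses the $V$-states at the head boundaries up front (as a function $\omega\colon\overline{1,n}\to\statesOf{V}$ fixed by an initial $\emptysequence$-move), runs $n$ copies of $V$ in parallel driven by the respective heads, and checks at acceptance that each copy reached the guessed start state of the next copy with the last one reaching a final state of $V$. The only detail you leave implicit is the simulation of $\emptysequence$-transitions of $U$ and $V$ inside $W$, which is routine.
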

\proofatend
Let $U=(\statesOf{U},\alphabet,\transitionsOf{U},\initialstateOf{U},\finalstatesOf{U})$
and $V=(\statesOf{V},\alphabet,\transitionsOf{V},\initialstateOf{V},\finalstatesOf{V})$.
We set $W:=(\statesOf{W},\alphabet,\transitionsOf{W},\initialstateOf{W},\finalstatesOf{W})$.
Let $\Omega$ be the set of functions $\overline{1,n}\to\statesOf{V}$.
Then, the set of states is $\statesOf{W}:=\set{\initialstateOf{W}}\uplus(\statesOf{U}\times\Omega\times\Omega)$.
The set of final states is $\finalstatesOf{W}:=\setcond{(\stateOf{U},\omega_1,\omega_2)}{\stateOf{U}\in\finalstatesOf{U},\text{ }\omega_1(n)\in\finalstatesOf{V},\text{ and }\omega_1(k)=\omega_2(k+1)\text{ for all }k\in\overline{1,n-1}}$.
The automaton has the following transitions:
\begin{itemize}
\item $\initialstateOf{W}\transitionto{\emptysequence}(\initialstateOf{U},\omega,\omega)$
for each $\omega\in\Omega$ with $\omega(1)=\initialstateOf{V}$,
\item $(\stateOf{U},\omega_1,\omega_2)\transitionto{k,a}(\stateOf{U}',\omega_1',\omega_2)$
if $\stateOf{U}\transitionto{k,a}\stateOf{U}'$, $\omega_1(k)\transitionto{a}\omega_1'(k)$, and $\omega_1(i)=\omega_1'(i)$
for $i\neq k$,
\item $(\stateOf{U},\omega_1,\omega_2)\transitionto{\emptysequence}(\stateOf{U}',\omega_1,\omega_2)$
if $\stateOf{U}\transitionto{\emptysequence}\stateOf{U}'$,
\item $(\stateOf{U},\omega_1,\omega_2)\transitionto{\epsilon}(\stateOf{U},\omega_1',\omega_2)$
if $\omega_1(k)\transitionto{\emptysequence}\omega_1'(k)$ and $\omega_1(i)=\omega_1'(i)$
for $i\neq k$.
\end{itemize}

Consider $\alpha=\alpha_1\cdots\alpha_n\in\langOf{U}\cap\langOf{V}$,
where $\alpha_k$ is produced by the $k^\text{th}$ head of $U$.
By the $\emptysequence$-transition from the initial state, $W$ guesses, for each $k$,
the state $\omega(k)$ that the automaton $V$ will reach after processing the prefix
$\alpha_1\cdots\alpha_{k-1}$ of $\alpha$.
The other transitions effectively execute the automaton $U$ synchronously with $n$
copies of the automaton $V$, each matching its own $\alpha_k$ subword of $\alpha$,
starting from the guessed initial state $\omega(k)$.
The set of final states $\finalstatesOf{W}$ makes sure that the guess was done correctly,
which means the $k^\text{th}$ copy of $V$ has reached the initial state of the $k+1^\text{th}$ copy, and the $n^\text{th}$
copy has reached a final state in $\finalstatesOf{V}$.
\endproofatend
\begin{lemma}\label{Lemma:EmptinessComplexity}
Emptiness for $n$-headed automata is \nl-complete.
\end{lemma}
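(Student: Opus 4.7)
\textbf{Proof plan for Lemma~\ref{Lemma:EmptinessComplexity}.}
The plan is to handle the two directions separately and rely almost entirely on the classical fact that reachability in a directed graph is \nl-complete.

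For \nl-hardness, I would reduce from emptiness of ordinary finite automata, which is well known to be \nl-hard. A $1$-headed automaton over $\Sigma$ is syntactically just a finite automaton over $\{1\}\times\Sigma\cong\Sigma$, and its language is empty iff the language of the underlying finite automaton is empty, since $\computOf{\sigma}=\projectionOf{\sigma}{1}$ in that case. This reduction is clearly computable in logspace, giving the lower bound.

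For the \nl~upper bound, the key observation is that the auxiliary map $\sigma\mapsto\computOf{\sigma}$ is a total function: for \emph{every} word $\sigma\in(\overline{1,n}\times\Sigma)^*$, $\computOf{\sigma}$ is a well-defined element of $\Sigma^*$. Consequently,
\[
\langOf{\automaton}=\emptyset\ \ \Longleftrightarrow\ \ \text{no }\sigma\text{ satisfies }\initialstate\transitionto{\sigma}\state\text{ for some }\state\in\finalstates,
\]
so the emptiness problem for the $n$-headed automaton coincides with the emptiness problem for the underlying (classical) finite automaton over the alphabet $\overline{1,n}\times\Sigma$. This in turn is the standard reachability question in the directed transition graph of $\automaton$, which is solvable in \nl~by guessing a path of length at most $|\states|$ one edge at a time and storing only the current state and a step counter in logarithmic space. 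The presence of $\emptysequence$-transitions makes no difference, since they contribute ordinary edges to the graph.

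There is no real obstacle here; the only thing to be careful about is that the projection and concatenation operation $\computOf{\cdot}$ is defined on \emph{all} extended words, so that the witness word for non-emptiness of $\langOf{\automaton}$ is literally any accepting run of the underlying automaton, with no feasibility side-condition to check. Once this observation is recorded, the bound is immediate from Savitch-style reachability arguments and matches the \nl-hardness from the single-headed case.
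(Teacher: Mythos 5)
Your proposal is correct and matches the paper's argument: since $\computOf{\cdot}$ is total, emptiness of an $n$-headed automaton coincides with emptiness of the underlying finite automaton over $\overline{1,n}\times\Sigma$, i.e.\ with reachability of a final state in its transition graph, which is exactly the paper's reduction "to finding a path from an initial to a final node in a directed graph," and hardness via the $1$-headed case is the intended lower bound. The only cosmetic point is that deciding \emph{emptiness} (rather than non-emptiness) in \nl{} implicitly uses closure of \nl{} under complement, and the path-guessing argument is the standard \nl{} one rather than "Savitch-style."
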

Multiheaded automata are incomparable with context-free grammars, and indeed
the normal-form computations of a program may be non-context-free.\footnote{
	Consider $\program:=(\set{\controlstate_0},\commands,\set{\controlstate_0\transitionto{\theread{0}{0}{0}{0}}\controlstate_0},\set{\controlstate_0})$ running on a single node.
	The language $\computationsOf{\program,1}$ is not context-free.
	To see this, let $\kindOf{a}=\readkind$, $\kindOf{b}=\popakind$, and $\kindOf{c}=\popbkind$. Then $\computationsOf{\program,1}\cap a^* b^* c^*$ is the non-context-free
	language $\set{a^pb^pc^p\mid p\geq 0}$.
}
Multiheaded automata can be understood as a restriction of matrix grammars \cite{DP1989}.
In matrix grammars, productions simultaneously rewrite multiple non-terminals.
Roughly, each production can be understood as a Petri net transition, and emptiness is decidable as Petri net reachability is.
Since we target a \pspace\ result, matrix grammars are too expressive for our purposes.
\subsection{Detecting Normal-Form Computations}\label{Subsection:Generating}

We define a 4-headed automaton $\wsautomaton(\program,
\nodecount):=(\wsstates\uplus\statesOf{\wsautomaton}^{\aux},\events,
\wstransitions,\initialwsstate,\wsstates)$ that accepts all normal-form
computations
$\tau=\tau_1\cdot\tau_2\cdot\tau_3\cdot\tau_4\in\computationsOf{\program,
\nodecount}$.
In order to accept $\tau_1$, the new automaton tracks the control and
memory configurations in the way $\eventautomaton(\program, \nodecount)$ does.
For the remainder of the computation, these configurations are not needed.
Indeed, $\tau_2$ to $\tau_4$ only consist of $\popakind$ and
$\popbkind$ events that are executable independently of the control and memory
configurations.
However, $\wsautomaton(\program, \nodecount)$ has to take care of the ordering
of $\popakind$ and $\popbkind$ events from the same queue.
In particular, if $\event_1$ handles a request issued before the request of
$\event_2$ with $\kindOf{\event_1}=\kindOf{\event_2}$, then it cannot be the
case that $\event_1\in\tau_j$ and $\event_2\in\tau_i$ with $i < j$.

Guided by this discussion, we define a state $\state\in\wsstates$ as a
tuple $\state:=(\stateconf,\memoryconf,\partconfa, \partconfb)$.
The state and memory configurations $\stateconf$ and $\memoryconf$ are defined
as in Section~\ref{Section:Programs}.
They reflect the state of the program after it has generated a prefix of $\tau_1$.
The functions $\partconfa, \partconfb:\ranks\times\queuedomain\rightarrow
\parts$ give, for each process and each queue, the part $\tau_1$ to $\tau_4$
of the computation
where the next $\popakind$ resp. $\popbkind$ event will be generated.
The initial state is $\initialstate:=(\initialstateconf,\initialmemoryconf, \partconfa_0, \partconfb_0)$
with $\partconfa_0(\rank,\queueid):=1=:\partconfb_0(\rank,\queueid)$
for all $\rank\in\ranks$ and $\queueid\in\queuedomain$.

The transition relation $\wstransitions$ is the smallest relation defined by
the rules in the Tables~\ref{Table:WellShapedAutomatonRules} and~\ref{Table:WellShapedAutomatonRulesFull}.
Rule~($\guessa'$) lets the automaton choose the part of the computation to
which the next $\popakind$ event will be appended.
The first restriction is that the index of the part can only increase, as
events from the same queue are processed in order.
The second restriction is that $\popakind$ events cannot be generated to the
right of $\popbkind$ events from the same queue.
Rule~($\guessb'$) is the similar rule for $\popbkind$ events.

By Rule~($\writekind'$), the automaton appends a $\writekind$ event to $\tau_1$
and the corresponding $\popakind$ and $\popbkind$
events in one shot to the parts determined by $\partconfa$ and $\partconfb$.
Since a single transition of a multiheaded automaton can generate at most one
letter, the rule makes use of intermediary states
from $\statesOf{\wsautomaton}^{\aux}$.
If $\popbkind$ is added to $\tau_1$, the memory configuration is updated
accordingly.
Note that the generation in one shot causes pop events
within the same part $\tau_i$ to follow in the order of the corresponding
$\readkind$/$\writekind$ events in $\tau_1$.
Fortunately, this is always the case in normal-form
computations by~\eqref{Equation:WellShaped}.
Computations that are not in normal form, e.g.
$\tau_{\onetoone}$, cannot be generated by $\wsautomaton(\program, \nodecount)$.

The set of final states of $\wsautomaton(\program,\nodecount)$ is $\wsstates$.
The auxiliary states $\statesOf{\wsautomaton}^{\aux}$ are not included in the set of
final states to forbid computations with pending remote requests.

\begin{table}[t!]
\centering

\ifappendix
	\therules{
	%  \therule
	%  {$\partconfa(\rank,\queueid)<\partconfb(\rank, \queueid)$}
	%  {$\state\transitionto{\varepsilon}\state'$ with $\partconfa':=\partconfa[(\rank,\queueid):=\partconfa(\rank,\queueid)+1]$}
	%  {($\guessa$)}
	
	%  \therule
	%  {$\command = \thewait{\anexpr_{\queueid}}$\qquad $\partconfb(\rank,
	%\valueOf{\anexpr_{\queueid}})=1$
	%  }
	%  {$\state\transitionto{1,(\waitkind,\rank,\bottom)}\state'$ with $\stateconf' =
	%\stateconf[\rank:=\controlstate_2]$}
	%  {($\waitkind$)}
	
	\ifappendix
	%\therule
	%  {$\command = \thewait{\anexpr_{\queueid}}$\qquad $\partconfb(\rank,\valueOf{\anexpr_{\queueid}})=1$}
	%  {$\state\transitionto{1,(\waitkind,\rank,\bottom)}\state'$ \quad $\stateconf'
	%= \stateconf[\rank:=\controlstate_2]$}
	%  {($\waitkind$)}
	
	  \therule
	  {$\command = \theload{\areg}{\anexpr}$}
	  {$\state\transitionto{1,(\loadkind,\rank,(\rank, \valueOf{\anexpr}\,))}\state'$\quad $\memoryconf':=\memoryconf[(\rank,\areg):=\memoryconf(\rank,\valueOf{\anexpr})]$}
	  {($\loadkind'$)}
	
	  \therule
	  {$\command = \thestore{\anexpr_{\anaddr}}{\anexpr_{\aval}}$}
	  {$\state\transitionto{1,(\storekind,\rank,(\rank,\anaddr))}\state'$\quad $\memoryconf':=\memoryconf[(\rank,\valueOf{\anexpr_{\anaddr}}):=\valueOf{\anexpr_{\aval}}]$}
	  {($\storekind'$)}
	
	  \therule
	  {$\command = \theassign{\areg}{\anexpr}$}
	  {$\state\transitionto{1,(\assignkind,\rank,\bottom)}\state'$\quad $\memoryconf':=\memoryconf[(\rank,\areg):=\valueOf{\anexpr}]$}
	  {($\assignkind'$)}
	
	  \therule
	  {$\command = \theassert{\anexpr}$\quad $\valueOf{\anexpr} \neq 0$}
	  {$\state\transitionto{1,(\assertkind,\rank,\bottom)}\state'$}
	  {($\assertkind'$)}
	\fi
	
	\ifappendix
	  \therule
	  {$\command=\theread{\anexpr_{\anaddr}^{\local}}{\anexpr_{\rank}^{\remote}}{\anexpr_{\anaddr}^{\remote}}{\anexpr_{\queueid}}$
	   \qquad
	   $\partconfa(\rank,\valueOf{\anexpr_{\queueid}})=m$
	   \qquad
	   $\partconfb(\rank,\valueOf{\anexpr_{\queueid}})=n$}
	  {$\state
	    \transitionto{1,(\readkind,\rank,\bottom)}
	    \state_{\textit{aux}1}
	    \transitionto{m,(\popakind,\rank,
	    (\valueOf{\anexpr_{\rank}^{\remote}},\valueOf{\anexpr_{\anaddr}^{\remote}}))}
	    \state_{\textit{aux}2}
	    \transitionto{n,(\popbkind,\rank,(\rank,\valueOf{\anexpr_{\anaddr}^{\local}}))}
	    \state'$\quad $\stateconf':=\stateconf[\rank:=\controlstate_2]$\\
	     if $n=1$ then $\memoryconf':=\memoryconf[(\rank,\valueOf{\anexpr_{\anaddr}^{\local}}):= \memoryconf(\valueOf{\anexpr_{\rank}^{\remote}},\valueOf{\anexpr_{\anaddr}^{\remote})}]$
	  }
	  {($\readkind'$)}
	\fi
	
	%  \therule
	%  {$\command=\thewrite{\anexpr_{\anaddr}^{\local}}{\anexpr_{\rank}^{\remote}}{\anexpr_{\anaddr}^{\remote}}{\anexpr_{\queueid}}$,
	%   \qquad
	%   $\partconfa(\rank,\valueOf{\anexpr_{\queueid}})=m$
	%   \qquad
	%   $\partconfb(\rank,\valueOf{\anexpr_{\queueid}})=n$}
	%  {$\state
	%    \transitionto{1,(\writekind,\rank,\bottom)}
	%    \state_{\textit{aux}1}
	% \transitionto{m,(\popakind,\rank,(\rank,\valueOf{\anexpr_{\anaddr}^{\local}}))}
	%    \state_{\textit{aux}2}
	% \transitionto{n,(\popbkind,\rank,(\valueOf{\anexpr_{\rank}^{\remote}},\valueOf{
	%\anexpr_{\anaddr}^{\remote}}))}
	%    \state'$\\ $\stateconf':=\stateconf[\rank:=\controlstate_2]$\qquad
	%if $n=1$ then $\memoryconf':=\memoryconf[(
	%\valueOf{\anexpr_{\rank}}^{\remote},\valueOf{\anexpr_{\anaddr}^{\remote}}
	%):=\memoryconf(\rank_{\local},\valueOf{\anexpr_{\anaddr}^{\local})}]$
	%  }
	%  {($\writekind$)}
	
	\ifappendix
	  \therule
	  {$\stateconf(\rank)\transitionto{\thebarrier}\stateconf'(\rank)$ for each $\rank\in\ranks$}
	  {$\state
	    \transitionto{1,(\barrierkind,1,\bottom)}
	    \state_{\textit{aux}1}
	    \transitionto{1,(\barrierkind,2,\bottom)}
	    \ldots
	    \transitionto{1,(\barrierkind,\nodecount,\bottom)}
	    (\stateconf', \memoryconf, \partconfa, \partconfb)$}
	  {($\barrierkind'$)}
	\fi
	}
\else
	\begin{tabular}{cc}
	\hspace{-1.5em}
	\therules{
	  \theruleleft
	  {$\partconfa(\rank,\queueid)<\partconfb(\rank, \queueid)$}
	  {$\state\transitionto{\varepsilon}\state'$~$\partconfa':=\partconfa[(\rank,\queueid):=\partconfa(\rank,\queueid)+1]$}
	  {($\guessa'$)}
	}
	& \hspace{-1em}
	\therules{
	
	  \therule
	  {$\partconfb(\rank, \queueid) < 4$}
	  {$\state\transitionto{\varepsilon}\state'$~$\partconfb':=\partconfb[(\rank,\queueid):=\partconfb(\rank,\queueid)+1]$}
	  {($\guessb'$)}
	
	}
	\\[0.6cm]
	\multicolumn{2}{c}{
	\therules{
	  \therule
	  {$\command=\thewrite{\anexpr_{\anaddr}^{\local}}{\anexpr_{\rank}^{\remote}}{\anexpr_{\anaddr}^{\remote}}{\anexpr_{\queueid}}$
	   \qquad
	   $\partconfa(\rank,\valueOf{\anexpr_{\queueid}})=m$
	   \qquad
	   $\partconfb(\rank,\valueOf{\anexpr_{\queueid}})=n$}
	  {$\state
	    \transitionto{1,(\writekind,\rank,\bottom)}
	    \state_{\textit{aux}1}
	 \transitionto{m,(\popakind,\rank,(\rank,\valueOf{\anexpr_{\anaddr}^{\local}}))}
	    \state_{\textit{aux}2}
	 \transitionto{n,(\popbkind,\rank,(\valueOf{\anexpr_{\rank}^{\remote}},\valueOf{
	\anexpr_{\anaddr}^{\remote}}))}
	    \state'$\quad $\stateconf':=\stateconf[\rank:=\controlstate_2]$\\
	if $n=1$ then $\memoryconf':=\memoryconf[(
	\valueOf{\anexpr_{\rank}}^{\remote},\valueOf{\anexpr_{\anaddr}^{\remote}}
	):=\memoryconf(\rank,\valueOf{\anexpr_{\anaddr}^{\local})}]$
	  }
	  {($\writekind'$)}
	}}
	\end{tabular}
\fi

\caption{Transition rules for $\wsautomaton(\program,\nodecount)$, given 
$\controlstate_1\transitionto{\command}\controlstate_2$ and current state 
$\state=(\stateconf,\memoryconf,\partconfa, \partconfb)$ with   
$\stateconf(\rank)=\controlstate_1$. 
The target is $\state'=(\stateconf',\memoryconf',\partconfa', \partconfb')$ where,
unless otherwise stated, $\stateconf'=\stateconf$, $\memoryconf'=\memoryconf$, 
$\partconfa'=\partconfa$, $\partconfb'=\partconfb$.
The auxiliary states $\state_{\mathit{aux}1}, \state_{\mathit{aux}2}\in \statesOf{\wsautomaton}^\aux$ 
are unique for each rule application.
}
\label{Table:WellShapedAutomatonRules\ifappendix Full\fi}
\end{table}

\lemmaatend
\begin{lemma}\label{Lemma:WSAutomatonGeneratesValidComputations}
$\wsautomaton(\program, \nodecount)$ only generates computations of $(\program, \nodecount)$: $\langOf{\wsautomaton(\program, \nodecount)}\subseteq\computationsOf{\program, \nodecount}$.
\end{lemma}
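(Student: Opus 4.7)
My plan is to show, for every accepting run of $\wsautomaton(\program, \nodecount)$ producing $\sigma \in (\overline{1,4}\times \events)^*$, that $\computOf{\sigma} = \projectionOf{\sigma}{1}\cdots\projectionOf{\sigma}{4}$ can be generated by a run of $\eventautomaton(\program, \nodecount)$ ending in $\finaleventstates$. I would simulate the event automaton letter-by-letter along $\computOf{\sigma}$, resolving the non-deterministic queue choice at each pop to match the originating macro-step. As preparation I would factor the $\wsautomaton$ run into $\varepsilon$-transitions (the guess rules) and \emph{macro-steps}, one per applied program transition. Because the auxiliary states $\statesOf{\wsautomaton}^{\aux}$ are entered only mid-macro-step and are unique per application, the three letters emitted by a single $\writekind'$ or $\readkind'$ macro-step---tagged $(1,\writekind)$ or $(1,\readkind)$, then $(m,\popakind)$, then $(n,\popbkind)$ with $1 \le m \le n \le 4$---appear consecutively in $\sigma$.

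The first invariant is agreement of control states and memories: while processing $\projectionOf{\sigma}{1}$, after each macro-step the simulated $\eventautomaton$ state agrees with the $\wsautomaton$ state on $(\stateconf, \memoryconf)$. Local commands update memory identically in both automata. For a remote macro-step with $m = n = 1$, $\wsautomaton$ collapses the three-stage effect into a single update while $\eventautomaton$ runs $\writekind, \popakind, \popbkind$ back-to-back; because these three letters appear consecutively in $\projectionOf{\sigma}{1}$ without intervening events, the net effect on $\memoryconf$ coincides. For $n \neq 1$ no destination write occurs in part 1 in either automaton, so agreement persists.

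The main obstacle is checking that every $\popakind$ and $\popbkind$ letter is label-matchable, i.e., the corresponding queue has the expected payload at the top. The key is a FIFO correspondence per queue: since $\partconfa$ and $\partconfb$ are non-decreasing along the $\wsautomaton$ run and letters within each part retain their emission order, two remote macro-steps on queue $(\rank, \queueid)$ executed at times $t_1 < t_2$ contribute pops to parts satisfying $m_1 \le m_2$ and $n_1 \le n_2$. Consequently the $k$-th $\popakind$ letter for $(\rank, \queueid)$ in $\computOf{\sigma}$ originates from the $k$-th remote macro-step on that queue, and analogously for $\popbkind$. Resolving the queue choice in $\eventautomaton$ accordingly, the top entry of $\queueconfa$ or $\queueconfb$ carries exactly the source or destination payload recorded in the letter, so the transition fires with the correct label.

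The rest is routine: the barrier rule has matching control-state preconditions, satisfied by the part-1 invariant; pops in $\projectionOf{\sigma}{2}\projectionOf{\sigma}{3}\projectionOf{\sigma}{4}$ impose no constraint beyond the queue-FIFO check already argued; and acceptance in $\wsautomaton$ excludes auxiliary states, so every remote macro-step contributes all three of its letters. Consequently each push in the simulation is matched by a pop, leaving all queues empty and placing $\eventautomaton$ in $\finaleventstates$.
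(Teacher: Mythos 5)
Your proposal is correct and follows essentially the same route as the paper: the paper proves the lemma by induction on the $\wsautomaton$ run with invariants that are exactly your ingredients --- agreement of $(\stateconf,\memoryconf)$ after the part-1 prefix, the fact that parts 2--4 contain only pops, and the constraint that the part counters $\partconfa,\partconfb$ bound where pops of each queue may appear, which yields your per-queue FIFO matching of the $k$-th pop to the $k$-th remote macro-step. The only difference is presentational (the paper re-establishes acceptance of the whole concatenation after each appended macro-step, while you simulate the completed run once), so no further comment is needed.
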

\begin{proof}
Consider $\initialwsstate\transitionto{\sigma}\stateOf{\wsautomaton}$ with $\stateOf{\wsautomaton}=(\stateconf,\memoryconf,\partconfa,\partconfb)\in\wsstates$.
Let $\tau=\computOf{\sigma}=\tau_1\cdot\tau_2\cdot\tau_3\cdot\tau_4$ with $\tau_i=\projectionOf{\sigma}{i}$.
We prove the following by induction on the length of the computation.

\begin{description}
\item[IS1] $\initialeventstate\transitionto{\tau}\stateOf{\eventautomaton}$ for some $\stateOf{\eventautomaton}\in\finaleventstates$. Membership in $\finaleventstates$ means the queues of $\stateOf{\eventautomaton}$ are empty.
\item[IS2] $\initialeventstate\transitionto{\tau_1}(\stateconf,\memoryconf,\queueconfa,\queueconfb)$ for some $\queueconfa$, $\queueconfb$, but with the same $\stateconf$, $\memoryconf$ as in $\stateOf{\wsautomaton}$ above.
\item[IS3] Let $\partconfa(\rank, \queueid)=k$. Then no $\tau_i$ with $i>k$ contains an event $\event$ with $\kindOf{\event}=\popakind$, $\rankOf{\event}=\rank$, and $\queueOf{\event}=\queueid$. A similar statement holds for $\queueconfb$.
\item[IS4] For all $\event\in\tau_2\cdot\tau_3\cdot\tau_4$ we have $\kindOf{\event}\in\set{\popakind,\popbkind}$.
\end{description}

In the base case with $\sigma=\emptysequence$ the inductive statement trivially holds.

Assume the statement holds for $\sigma$.
Consider $\initialwsstate\transitionto{\sigma'}\stateOf{\wsautomaton}'=(\stateconf',\memoryconf',\partconfa', \partconfb')$ which extends $\sigma$ with Rule~($\readkind'$):
\begin{align*}
\sigma'=\sigma\cdot(1,\event_1)\cdot(2,\event_2)\cdot(3,\event_3)\qquad  \kindOf{\event_1}=\readkind,\ \kindOf{\event_2}=\popakind,\ \kindOf{\event_3}=\popbkind.
 \end{align*}
Then $\memoryconf'=\memoryconf$, $\partconfa'=\partconfa$, $\partconfb'=\partconfb$, and $\tau'=\computOf{\sigma'}=\tau_1'\cdot\tau_2'\cdot\tau_3'\cdot\tau_4'$, where $\tau_i'=\projectionOf{\sigma'}{i}$ are $\tau_1'=\tau_1\cdot\event_1$, $\tau_2'=\tau_2\cdot\event_2$, $\tau_3'=\tau_3\cdot\event_3$, and $\tau_4'=\tau_4$.
Since \textbf{IS4} and \textbf{IS3} hold for $\sigma$, they also hold for $\sigma'$ by definition of $\sigma'$ and Rule~($\readkind'$).

It remains to check the behaviour of the state-space automaton.
By \textbf{IS2} from the induction hypothesis and the Rules~(\readkind) and~(\readkind'), we have $\initialeventstate\transitionto{\tau_1\cdot\event_1}(\stateconf',\memoryconf,\queueconfa',\queueconfb)$.
So \textbf{IS2} holds for $\sigma'$ as well.
To check \textbf{IS1} for $\sigma'$, we consider the content of $\queueconfa'$.
According to Rule~($\readkind$), we have $\queueconfa':=\queueconfa[(\rankOf{\event_1},\queueOf{\event_1}):=\queueconfa(\rankOf{\event_1},\queueOf{\event_1})\cdot(\rank_{\remote},\anaddr_{\remote},\rank_{\local},\anaddr_{\local})]$.
By the induction hypothesis, we can generate $\tau_2$ from $(\stateconf,\memoryconf,\queueconfa,\queueconfb)$.
In $(\stateconf',\memoryconf,\queueconfa',\queueconfb)$, we append an action to $\queueconfa$.
Since $\tau_2$ only consists of $\popakind$ and $\popbkind$ events,
we can still generate the computation from $(\stateconf',\memoryconf,\queueconfa',\queueconfb)$. This yields $\initialeventstate\transitionto{\tau_1\cdot\event_1\cdot\tau_2}\state_1$ for some $\state_1$.

We now show that $\state_1\transitionto{\event_2}\state_2$ for some $\state_2$.
Let $\state_1=(\stateconf'',\memoryconf'',\queueconfa'',\queueconfb'')$.
When checking \textbf{IS3} for $\sigma'$, we noted that $\tau_3\cdot\tau_4$ does not contain $\popakind$ events $\tilde\event$ with rank $\rankOf{\tilde\event}=\rankOf{\event_1}$ and queue id $\queueOf{\tilde\event}=\queueOf{\event_1}$.
Therefore, by \textbf{IS1} from the induction hypothesis, all elements in $\queueconfa(\rankOf{\event_1},\queueOf{\event_1})$ are popped by $\popakind$ transitions in $\tau_2$.
As a result, $\queueconfa''(\rankOf{\event_1},\queueOf{\event_1})$ contains only the single element added by $\event_1$.
Comparing Rules~($\readkind$),~($\popakind$), and~($\readkind$'), shows $\state_1\transitionto{\event_2}\state_2$.
Note that we need to take the read-rules into account to make sure the contents of the tuple $\event_2$ coincide for $\wsautomaton(\program, \nodecount)$ and $\eventautomaton(\program, \nodecount)$.

The fact that $\eventautomaton(\program, \nodecount)$ can accept the rest of computation $\tau'$ ($\state_2\transitionto{\tau_3\cdot\event_3\cdot\tau_4}\state_3$ for some $\state_3$) is proven similarly.
Emptiness of the queues in $\state_3$ follows from Rule~($\readkind$') and \textbf{IS1} for $\tau$.

The argumentation for write events, $\kindOf{\event_1}=\writekind$, is the same.
For the remaining kinds of events $\event_1$, the proofs are simpler.
There, we only need to make use of state and memory configurations, which coincide in $\wsautomaton(\program, \nodecount)$ and $\eventautomaton(\program, \nodecount)$.
\end{proof}

\begin{lemma}\label{Lemma:WSAutomatonGeneratesAllWellShapedComputations}
Automaton $\wsautomaton(\program, \nodecount)$ generates all normal-form computations of the program: $\setcond{\tau\in\computationsOf{\program, \nodecount}}{\tau\text{ is in normal form}}\subseteq\langOf{\wsautomaton(\program, \nodecount)}$.
\end{lemma}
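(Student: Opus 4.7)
The plan is to construct, for each normal-form computation $\tau = \tau_1 \cdot \tau_2 \cdot \tau_3 \cdot \tau_4 \in \computationsOf{\program, \nodecount}$, a rule sequence $\sigma$ of $\wsautomaton(\program, \nodecount)$ with $\initialwsstate \transitionto{\sigma} \stateOf{\wsautomaton}$ for some $\stateOf{\wsautomaton} \in \wsstates$ and $\computOf{\sigma} = \tau$. Let $e_1, \ldots, e_\ell$ enumerate the non-pop events of $\tau_1$ in their original order. For each remote read/write $e_k$ on queue $(\rank_k, q_k)$, let $m_k, n_k \in \parts$ denote the parts of $\tau$ that contain the matching $\popakind$ and $\popbkind$ events; since popa precedes popb in any execution, $m_k \leq n_k$.

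First I would process $e_1, \ldots, e_\ell$ in order. For each local, barrier, load, or store event $e_k$, I apply the matching rule from Table~\ref{Table:WellShapedAutomatonRules}, producing the event on head~$1$. For each remote read/write $e_k$, I first apply $(\guessb')$ until $\partconfb(\rank_k, q_k) = n_k$, then $(\guessa')$ until $\partconfa(\rank_k, q_k) = m_k$, and finally $(\writekind')$ or $(\readkind')$. Enabledness of the guessing steps is routine: $(\guessb')$ requires $\partconfb < 4$, which holds throughout since $n_k \leq 4$; $(\guessa')$ requires $\partconfa < \partconfb$, and after first raising $\partconfb$ to $n_k$ every intermediate step $\partconfa$ from $j$ to $j+1$ with $j < m_k$ satisfies $j < m_k \leq n_k = \partconfb$. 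Queue-FIFO together with (NF) furthermore forces $m_k \leq m_{k+1}$ and $n_k \leq n_{k+1}$ for consecutive events on a common queue, so $\partconfa$ and $\partconfb$ only ever need to move upward.

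For enabledness of the local and $(\writekind')/(\readkind')$ rules I prove by induction on $k$ that right after processing $e_k$ the pair $(\stateconf, \memoryconf)$ tracked by $\wsautomaton$ coincides with that of $\eventautomaton$ just after the prefix of $\tau_1$ ending in $e_k$ together with any of its pops that lie in $\tau_1$. The crux is the following structural consequence of (NF), which I would establish first and then use repeatedly: in $\tau_1$, between consecutive non-pops $e_k$ and $e_{k+1}$ only the pops of $e_k$ itself can appear, and if both of them lie in $\tau_1$ they appear in the order $\popakind$ then $\popbkind$. This makes the atomic update performed by $(\writekind'/\readkind')$ when $n_k = 1$ coincide with the memory update $\eventautomaton$ performs at the $\popbkind$ position inside $\tau_1$, so expression evaluations agree and the event labels emitted by $\wsautomaton$ match those of $\tau$.

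The hard part will be verifying $\computOf{\sigma} = \tau$. On head~$1$, by the structural fact above, $\sigma$ emits the non-pops of $\tau_1$ in their original order, each immediately followed by those of its pops whose index equals $1$, which is exactly $\tau_1$. On each head $j \in \{2, 3, 4\}$, $\sigma$ emits the $\popakind$ of $e_k$ whenever $m_k = j$ and the $\popbkind$ of $e_k$ whenever $n_k = j$, in increasing order of $k$. Condition (NF) states precisely that the pops in $\tau_j$ are ordered by the order of their originators in $\tau_1$, and the popa-before-popb constraint handles the case where both pops of a single $e_k$ fall in the same $\tau_j$; together these give $\projectionOf{\sigma}{j} = \tau_j$. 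The main obstacle is therefore the structural lemma derived from (NF), which is what simultaneously makes head~$1$ come out right and makes the multiheaded automaton's atomic triple-emission semantics compatible with the flexible placement of pops throughout $\tau$.
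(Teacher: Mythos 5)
Your proposal is correct and follows essentially the same route as the paper's proof: an induction over the non-pop events of $\tau_1$, maintaining that $(\stateconf,\memoryconf)$ in $\wsautomaton(\program,\nodecount)$ matches the state-space automaton, advancing $\partconfa,\partconfb$ with the guess rules (your FIFO-based monotonicity of $m_k,n_k$ is the paper's FIFO contradiction argument in different clothing), and using~\eqref{Equation:WellShaped} to get the per-head orderings right. Your explicit structural lemma about pops being contiguous with their originators inside $\tau_1$ is exactly what the paper compresses into the remark that normal-form computations can be extended inductively, so the two arguments coincide in substance.
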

\begin{proof}
Consider a normal-form computation $\tau=\tau_1\cdot\tau_2\cdot\tau_3\cdot\tau_4\in\computationsOf{\program, \nodecount}$ with $\initialeventstate\transitionto{\tau_1}\stateOf{\eventautomaton}$ for some $\stateOf{\eventautomaton}=(\stateconf,\memoryconf,\queueconfa,\queueconfb)$.
To prove that $\wsautomaton(\program, \nodecount)$ can generate $\tau$, we show the following by induction on the length of the computation. (Note that by~\eqref{Equation:WellShaped} we can extend normal-form computations inductively).

\begin{description}
\item[IS1] $\initialwsstate\transitionto{\sigma}\stateOf{\wsautomaton}=(\stateconf,\memoryconf,\partconfa, \partconfb)$ with $\stateconf$ and $\memoryconf$ from $\stateOf{\eventautomaton}$ above.
\item[IS2] We have $\projectionOf{\sigma}{i}=\tau_i$ for all $i\in\overline{1, 4}$.
\item[IS3] Let the last $\event$ with $\kindOf{\event}=\popakind$, $\rankOf{\event}=\rank$, $\queueOf{\event}=\queueid$ be in $\tau_k$. Then $\partconfa(\rank, \queueid)=k$. If there is no such event, $\partconfa(\rank, \queueid)=1$. There is a similar requirement for $\popbkind$ events.
\end{description}

Note that computation $\emptysequence$ satisfies all the constraints.
Assume the constraints hold for computation $\tau$.
We extend $\tau$ to a computation $\tau'=\tau_1'\cdot\tau_2'\cdot\tau_3'\cdot\tau_4'$, and show that it also satisfies \textbf{IS1} to \textbf{IS3}.
Extending $\tau$ adds an event to the first part of the computation, $\stateOf{\eventautomaton}\transitionto{\event_1}\stateOf{\eventautomaton}'$.
We do a case distinction based on $\kindOf{\event_1}$.

Consider the case $\kindOf{\event_1}=\readkind$.
Let $\event_1\equivalenceorder\event_2\equivalenceorder\event_3$ with $\tau_2'=\tau_2\cdot \event_2$ and $\tau_3'=\tau_3\cdot \event_3$.
Assume $\event_1$ was generated by the transition $\controlstate_1\transitionto{\command}\controlstate_2$.
This means $\stateconf(\rankOf{\event_1})=\controlstate_1$.
By \textbf{IS1} in the induction hypothesis, $\stateOf{\eventautomaton}$ and $\stateOf{\wsautomaton}$ share the same $\stateconf$ and $\memoryconf$.
Therefore, by Rules~(\readkind) and~(\readkind'), $\wsautomaton(\program, \nodecount)$ can mimic the read in $\eventautomaton(\program, \nodecount)$.
To make sure we append $\event_2$ to $\tau_2$, we have to check the requirements on $\partconfa$.
If $\partconfa(\rankOf{\event_2}, \queueOf{\event_2})< 2$, we can use Rule~(\guessa') to adapt the counter.
If we assume that $\partconfa(\rankOf{\event_2}, \queueOf{\event_2})= k > 2$, we derive a contradiction as follows.
By the induction hypothesis, there is an event $\event'$ in $\tau_k$ with
$\rankOf{\event'}=\rankOf{\event_2}$, $\queueOf{\event'}=\queueOf{\event_2}$, and $\kindOf{\event'}=\kindOf{\event_2}=\popakind$.
This event has a corresponding event $\event\equivalenceorder \event'$ in $\tau_1$.
Summing up, $\event$, $\event_1$, $\event_2$, $\event'$ are contained in $\tau$ in this order. Moreover, the latter two events are added to the same queue in reverse order: $\event'$ before $\event_2$. A contradiction to the definition of FIFO.
We conclude
\begin{align*}
\stateOf{\wsautomaton}\transitionto{(1,\event_1)\cdot(2,\event_2)\cdot(3,\event_3)}\stateOf{\wsautomaton}'.
\end{align*}
The requirements \textbf{IS1} to \textbf{IS3} are readily checked.
The argumentation for write events is the same.
For the remaining kinds of events, the induction step is simpler since $\stateconf$ and $\memoryconf$ coincide in $\stateOf{\eventautomaton}$ and $\stateOf{\wsautomaton}$.
\end{proof}
\endlemmaatend

\begin{lemma}\label{Lemma:WSAutomatonWorks}
$\setcond{\tau\in\computationsOf{\program, \nodecount}}{\tau\text{ is in normal form}}=\langOf{\wsautomaton(\program, \nodecount)}$.
\end{lemma}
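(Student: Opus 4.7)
The equality is the conjunction of two inclusions, both of which are essentially already delivered by the two preceding lemmas in the appendix, so the plan is to combine them and fill in the one small gap that remains.

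For the inclusion $\supseteq$, Lemma~\ref{Lemma:WSAutomatonGeneratesAllWellShapedComputations} already states precisely that every normal-form computation lies in $\langOf{\wsautomaton(\program, \nodecount)}$, so this direction is immediate and requires no further argument.

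For $\subseteq$, fix $\tau\in\langOf{\wsautomaton(\program, \nodecount)}$ with accepting run $\initialwsstate\transitionto{\sigma}\stateOf{\wsautomaton}$ and $\tau=\computOf{\sigma}=\tau_1\cdot\tau_2\cdot\tau_3\cdot\tau_4$ where $\tau_i=\projectionOf{\sigma}{i}$. Lemma~\ref{Lemma:WSAutomatonGeneratesValidComputations} yields $\tau\in\computationsOf{\program, \nodecount}$, so the only remaining task is to check that $\tau$ satisfies the two parts of Definition~\ref{Definition:WellShapedComputation}. For the shape requirement (events in $\tau_2\cdot\tau_3\cdot\tau_4$ have kind $\popakind$ or $\popbkind$), I would argue by inspection of the transition table of $\wsautomaton(\program, \nodecount)$: only the compound Rules $(\readkind')$ and $(\writekind')$ emit letters whose head index can differ from~$1$, and for these rules the letters placed on heads $m,n\in\parts$ are precisely the matching $\popakind$ resp.\ $\popbkind$ events. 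Every other rule either targets head~$1$ or is an $\varepsilon$-transition.

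For the ordering condition~\eqref{Equation:WellShaped}, let $a,b\in\tau_1$ with $a\before{\tau_1}b$, $a\not\equivalenceorder^* b$, and matching $a'\equivalenceorder^* a$, $b'\equivalenceorder^* b$ both located in the same part $\tau_i$. The key observation is that a single application of $(\readkind')$ or $(\writekind')$ atomically emits three consecutive letters of $\sigma$: the r/w on head~$1$ and the two pops on the heads fixed by $\partconfa$ and $\partconfb$ at that moment. Since $a$ precedes $b$ on head~$1$, the compound firing generating $a$ also precedes the compound firing generating $b$ in $\sigma$; consequently, on any head $i$, the pop emitted by $a$'s firing precedes the pop emitted by $b$'s, giving $a'\before{\tau_i}b'$. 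The $\varepsilon$-rules $(\guessa')$ and $(\guessb')$ may be interspersed freely, but they emit no letters and therefore cannot disturb this relative order.

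The main obstacle, in my assessment, is purely notational bookkeeping around $\computOf{\sigma}$ versus $\sigma$ itself: one has to make the slogan ``compound firings are atomic'' precise with respect to possible interleaving by guess $\varepsilon$-transitions, and one has to verify that matches $a\equivalenceorder^* a'$ in the generated computation are exactly the ones synthesised by a single compound rule. No further combinatorial argument beyond the two preceding lemmas is needed, so the proof of Lemma~\ref{Lemma:WSAutomatonWorks} reduces to a one-line citation in one direction and a short inspection of the transition rules in the other.
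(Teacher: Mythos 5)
Your proposal matches the paper's proof: the $\supseteq$ direction is cited from Lemma~\ref{Lemma:WSAutomatonGeneratesAllWellShapedComputations}, and the $\subseteq$ direction combines Lemma~\ref{Lemma:WSAutomatonGeneratesValidComputations} with the observation that $\wsautomaton(\program, \nodecount)$ only emits computations in normal form, which is exactly the structure of the paper's argument. Your additional inspection of the transition rules (only the compound Rules $(\readkind')$ and $(\writekind')$ target heads other than~$1$, and their atomic firings force condition~\eqref{Equation:WellShaped}) simply spells out the observation the paper leaves implicit.
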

\proofatend
The inclusion from left to right is Lemma~\ref{Lemma:WSAutomatonGeneratesAllWellShapedComputations}.
The inclusion from right to left holds by Lemma~\ref{Lemma:WSAutomatonGeneratesValidComputations} and the observation that $\wsautomaton(\program, \nodecount)$ only generates computations in normal form. \endproofatend

\lemmaatend
The following lemma states that $\wsautomaton(\program, \nodecount)$ generates events in program order.
\begin{lemma}\label{Lemma:GeneratesInProgramOrder}
Consider computation $\initialwsstate\transitionto{\sigma}\stateOf{\wsautomaton}$ with events $(1,\event_1)\before{\sigma}(1,\event_2)$ so that $\kindOf{\event_1}, \kindOf{\event_2}\notin \{\popakind, \popbkind\}$ and $\rankOf{\event_1}=\rankOf{\event_2}$.
Then $\event_1\progorder^+\event_2$ in $\tau=\computOf{\sigma}$.
\end{lemma}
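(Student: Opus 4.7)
My plan is to extract the claim from two structural observations about the rules of $\wsautomaton(\program,\nodecount)$ in Tables~\ref{Table:WellShapedAutomatonRules} and~\ref{Table:WellShapedAutomatonRulesFull}, and then translate the ordering in $\sigma$ into program order in $\tau$.

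First, I would inspect the transition rules and verify that every event of a kind in $\kinddomain\setminus\{\popakind,\popbkind\}$ is emitted with head index $1$. Concretely, Rules~($\loadkind'$), ($\storekind'$), ($\assignkind'$), ($\assertkind'$), and ($\barrierkind'$) only ever write letters of the form $(1,\event)$; and in the compound Rules~($\readkind'$) and ($\writekind'$), the leading letter labelled by a $\readkind$ or $\writekind$ event is also emitted on head $1$, while the following $\popakind$ and $\popbkind$ letters are the only ones that may appear on heads $2,3,4$. The $\varepsilon$-rules~($\guessa'$) and ($\guessb'$) emit no letters at all. Consequently, every event of $\sigma$ whose kind lies outside $\{\popakind,\popbkind\}$ carries head index $1$.

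From this, it follows immediately that the projections $\tau_2,\tau_3,\tau_4$ consist exclusively of $\popakind$ and $\popbkind$ events, so the subsequence of non-pop events of $\tau$ is precisely the subsequence of non-pop events of $\tau_1 = \projectionOf{\sigma}{1}$, appearing in the same relative order they were generated by head~$1$. Thus the assumption $(1,\event_1)\before{\sigma}(1,\event_2)$ with $\kindOf{\event_1},\kindOf{\event_2}\notin\{\popakind,\popbkind\}$ yields $\event_1\before{\tau_1}\event_2$, hence $\event_1\before{\tau}\event_2$.

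Finally, I would invoke the definition of program order from Section~\ref{Subsection:Traces}: $\progorder^{\rank}$ is the successor relation of the subsequence $\tau'$ obtained from $\tau$ by retaining only events of rank $\rank$ whose kind is not in $\{\popakind,\popbkind\}$. Since $\event_1$ and $\event_2$ share rank $\rank$, are both non-pop, and $\event_1$ precedes $\event_2$ in $\tau$, both appear in $\tau'$ in that order, giving $\event_1\progorder^+\event_2$ as required. The only point that asks for any care is the bookkeeping in the rules with auxiliary states, to confirm that the head $1$ letter is indeed the non-pop event and not the pop emitted afterwards; everything else is a direct consequence of the definitions.
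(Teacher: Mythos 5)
Your argument is correct and is exactly the paper's (the paper's proof is just the one-liner ``by definition of the transition relation $\wstransitions$ and $\progorder$''): head-1 letters of $\sigma$ appear in $\tau_1$, hence in $\tau$, in the same order, and since both events are non-pop and of the same rank they occur in the subsequence defining $\progorder^{\rank}$, giving $\event_1\progorder^+\event_2$. Your extra check that all non-pop events are emitted on head 1 is a harmless strengthening (the hypothesis already places both letters on head 1), so there is nothing to add.
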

\begin{proof}
By definition of the transition relation $\wstransitions$ and $\progorder$.
\end{proof}
\endlemmaatend

\lemmaatend
The following lemma states that $\wsautomaton(\program, \nodecount)$ generates the events $\popakind$ and $\popbkind$ immediately after the corresponding $\readkind$ or $\writekind$ event.
\begin{lemma}\label{Lemma:GeneratesPopsImmediately}
Let $\initialwsstate\transitionto{\sigma}\stateOf{\wsautomaton}$,  $\tau=\computOf{\sigma}$, and $\event_1, \event_2, \event_3\in\tau$ with $\kindOf{\event_1}\in\{\readkind, \writekind\}$, $\kindOf{\event_2}=\popakind$, and $\kindOf{\event_3}=\popbkind$. Then $\event_1\equivalenceorder\event_2\equivalenceorder\event_3$ holds in $\tau$ if and only if $\sigma=\sigma_1\cdot(1,\event_1)\cdot(m,\event_2)\cdot(n,\event_3)\cdot\sigma_2$ for some $\sigma_1$, $\sigma_2$ and $m,n\in\parts$
with $m\leq{}n$.
\end{lemma}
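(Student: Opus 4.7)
The proof reduces to inspecting the transition rules of $\wsautomaton(\program,\nodecount)$. The only rules that emit an event of kind $\readkind$ or $\writekind$ are $(\readkind')$ and $(\writekind')$, and each such rule emits, in a single application, exactly three letters of $\sigma$: first $(1,\event_1)$ with $\kindOf{\event_1}\in\{\readkind,\writekind\}$ from the ``real'' state, then $(m,\event_2)$ with $\kindOf{\event_2}=\popakind$ from $\state_{\aux1}$, and finally $(n,\event_3)$ with $\kindOf{\event_3}=\popbkind$ from $\state_{\aux2}$, where $m=\partconfa(\rank,\valueOf{\anexpr_{\queueid}})$ and $n=\partconfb(\rank,\valueOf{\anexpr_{\queueid}})$ at the time of the application. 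Since the auxiliary states $\state_{\aux1},\state_{\aux2}\in\statesOf{\wsautomaton}^{\aux}$ are unique to the rule application, no other transition is enabled from them: the three letters must appear consecutively in $\sigma$.

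Before turning to the two directions, I will establish the invariant $\partconfa(\rank,\queueid)\leq\partconfb(\rank,\queueid)$ on every reachable state. Initially both functions are $1$; rule $(\guessa')$ has the precondition $\partconfa<\partconfb$, so incrementing $\partconfa$ preserves the invariant; rule $(\guessb')$ only increments $\partconfb$; and rules $(\readkind')$, $(\writekind')$, and the remaining ones leave $\partconfa$ and $\partconfb$ unchanged. Thus $m\leq n$ whenever rule $(\readkind')$ or $(\writekind')$ fires, which is what the statement requires.

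For the ``if'' direction, assume the pattern $\sigma=\sigma_1\cdot(1,\event_1)\cdot(m,\event_2)\cdot(n,\event_3)\cdot\sigma_2$ with the stated kinds. The letter $(1,\event_1)$ can be produced only by an application of $(\readkind')$ or $(\writekind')$ starting from a non-auxiliary state, and the subsequent two letters must complete that application (since $\state_{\aux1}$ has a single outgoing transition, as does $\state_{\aux2}$). Hence $\event_2$ and $\event_3$ are the pop events corresponding to $\event_1$, and $\event_1\equivalenceorder\event_2\equivalenceorder\event_3$ follows from the definition of $\equivalenceorder$ applied to $\tau=\computOf{\sigma}$. For the ``only if'' direction, suppose $\event_1\equivalenceorder\event_2\equivalenceorder\event_3$ holds in $\tau$. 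Since $\equivalenceorder$ is defined by matching the $i$th elements of the projections $\projectionOf{\tau}{\{\writekind,\readkind\}}$, $\projectionOf{\tau}{\popakind}$, and $\projectionOf{\tau}{\popbkind}$ (per process and queue), and since every triple of such $i$th elements is produced jointly by a single application of $(\readkind')$ or $(\writekind')$ as observed above, the three events must be precisely the three letters emitted by that application, appearing in $\sigma$ in the required pattern with $m$ and $n$ equal to the values of $\partconfa$ and $\partconfb$ when the rule fired.

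The only delicate point, and what I would treat as the ``main obstacle,'' is ensuring that the matching induced by $\equivalenceorder$ on $\tau$ coincides with the matching induced by rule applications on $\sigma$. This requires an auxiliary counting argument: a simple induction on the length of $\sigma$ shows that the $i$th event of kind $\readkind/\writekind$ in $\tau$, the $i$th event of kind $\popakind$, and the $i$th event of kind $\popbkind$ (restricted to the process and queue in question) are exactly the three letters output by the $i$th application of $(\readkind')$ or $(\writekind')$ affecting that process and queue; this uses the invariant above to conclude that pop events for a fixed queue appear in $\sigma$ in the same order as the corresponding remote accesses.
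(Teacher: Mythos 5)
Your proposal is correct and follows essentially the same route as the paper, whose proof is exactly the observation you expand: only Rules~($\readkind'$) and~($\writekind'$) emit $\readkind/\writekind$ events and they emit the triple in one shot through the unique auxiliary states, the preconditions of~($\guessa'$) and~($\guessb'$) give $\partconfa\leq\partconfb$ and hence $m\leq n$, and the definition of $\equivalenceorder$ supplies the matching. The only (harmless) imprecision is in your last sentence: the counting argument needs the pop events of a fixed process and queue to appear in $\tau=\computOf{\sigma}$ in emission order, which follows from the counters $\partconfa,\partconfb$ being nondecreasing along the run rather than from the invariant $\partconfa\leq\partconfb$ itself.
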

\begin{proof}
By Rules ($\readkind$) and ($\writekind$), the preconditions on ($\guessa$) and ($\guessb$), and the definition of $\equivalenceorder$.
\end{proof}
\endlemmaatend

\subsection{Detecting Violations}\label{Subsection:DetectingHappensBeforeCycles}

The multiheaded automaton accepts all normal form computations, and we would like to check
if one of those computations is violating.
In general, violating computations can contain complicated cycles in the happens-before
relation.
However, we now show that whenever a computation has a happens-before cycle, it
has a cycle in which each process is entered and left at most once.
Our algorithm for robustness will look for happens-before cycles of this special form that,
as we will show, can be captured by a regular language.

\begin{lemma}\label{Lemma:EachRankContributesOnce}
Computation $\tau\in\computationsOf{\program, \nodecount}$ is violating iff there is a cycle
\begin{multline}
a_1\equivalenceorder^*b_1\progorder^*c_1\equivalenceorder^*d_1\cycorder\dots\cycorder{}
a_k\equivalenceorder^*b_k\progorder^*c_k\equivalenceorder^*d_k\cycorder{}a_1\qquad
\tag{CYC}\label{Tag:CyclicChain}
\end{multline}
where $\rankOf{x_i}= \rankOf{y_j}$ iff $i= j$, for all $x_i, y_j\in\{a_1,\ldots, d_k\}$, and $\cycorder\ :=\ \conflictorder\cup\equivalenceorder$.
\end{lemma}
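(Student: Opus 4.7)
The easy direction ($\Leftarrow$) amounts to checking that a chain of the shape~\eqref{Tag:CyclicChain} is a non-trivial happens-before cycle. All edges used are $\progorder$, $\equivalenceorder$, or $\conflictorder$, so the chain lies in $\happensbeforeOf{\tau}$. For non-triviality: if $k\geq 2$ the cycle connects distinct ranks, and since any single $\equivalenceorder$-equivalence class for a remote access contains at most three events and for a barrier contains exactly one event per rank, the cross-rank $\cycorder$-steps cannot all be absorbed into one equivalence class closing the loop; if $k=1$ the $\progorder^*$- or $\conflictorder$-step inside the visit forces at least one edge outside $\equivalenceorder$.

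For the hard direction ($\Rightarrow$) I would start from a non-trivial happens-before cycle $C\colon e_1\happensbefore e_2\happensbefore\cdots\happensbefore e_m\happensbefore e_1$ of \emph{minimum length} in $\tau$. The plan is then to reshape $C$ into the canonical form~\eqref{Tag:CyclicChain} by two rewrites. First, since $\progorder$ relates only events of the same rank and is transitive on same-rank events, collapse every maximal run of $\progorder$-edges into a single $\progorder^*$-block sitting entirely inside one rank. Second, inside each maximal same-rank segment, push $\equivalenceorder$-edges to the outside of the $\progorder^*$-block, yielding the pattern $\equivalenceorder^*\progorder^*\equivalenceorder^*$ per rank visit; between two such visits the cycle steps via an edge that is neither $\progorder$ (it crosses ranks) nor subsumed into a visit, so it must be a $\cycorder$-edge, i.e., a $\conflictorder$- or cross-rank $\equivalenceorder$-edge.

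The heart of the argument is then the rank-uniqueness clause: $\rankOf{x_i}=\rankOf{y_j}$ iff $i=j$. Suppose, for contradiction, that two distinct visits $V_i$ and $V_j$ both involve rank $r$. Pick representative events $e\in V_i$ and $e'\in V_j$ of rank $r$. If both are non-pop events, they are totally ordered by $\progorder^r$, yielding an edge (or an $\happensbefore$-path) between $V_i$ and $V_j$ that bypasses the intermediate visits, producing a strictly shorter happens-before cycle, contradicting minimality. If one of them is a $\popakind$ or $\popbkind$ event, route through its $\equivalenceorder$-partner on rank $r$ (the original $\readkind$/$\writekind$ event, which is a non-pop on rank $r$), and use $\progorder$ from there; this bridge is already provided by the $\equivalenceorder^*$-padding at the endpoints of each visit, so the shortened cycle still fits the normal form. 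One must check that the shortcut cycle is still non-trivial: this follows because the detour we remove contained cross-rank $\cycorder$-edges whose endpoints lie in different equivalence classes, and the remaining cycle still visits $\geq 1$ distinct rank with at least one $\progorder$- or $\conflictorder$-step.

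The main obstacle I foresee is Step 2 together with the shortcutting: one has to argue carefully that $\equivalenceorder$-edges can always be pushed to the boundary of a same-rank segment without breaking the cycle, and that pop events (which participate in $\equivalenceorder$ but not in $\progorder$) are correctly absorbed into the adjacent $\equivalenceorder^*$-prefix or -suffix. Making precise the claim that shortcutting via $\progorder$ preserves non-triviality requires inspecting what the removed arc contributed to the equivalence-class structure of the cycle, which is the only delicate bookkeeping in the proof.
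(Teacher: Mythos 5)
Your proposal takes essentially the same route as the paper: the paper also writes an arbitrary happens-before cycle in the per-rank block form $\equivalenceorder^*\progorder^*\equivalenceorder^*$ joined by $\cycorder$-edges and, whenever two blocks share a rank $\rank$, shortcuts through program order (either $b_i\progorder^* c_j$ or $b_j\progorder^* c_i$) to merge them, iterating until all ranks are distinct. Your minimum-length-cycle contradiction is just this same shortcutting argument phrased via minimality, so the proposal is correct and matches the paper's proof.
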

\proofatend
Consider an arbitrary cycle.
It has the following form:
\begin{align*} a_1\equivalenceorder^* b_1\progorder^*c_1\equivalenceorder^* d_1\cycorder\ldots\cycorder a_n\equivalenceorder^*b_n\progorder^*c_n\equivalenceorder^* d_n\cycorder a_1.
\end{align*}
Assume now $\rankOf{a_i}=\ldots=\rankOf{d_i}=\rankOf{a_j}=\ldots=\rankOf{d_j}$ for some $i<j$.
Fix these $i$ and $j$.
Then either $b_i\progorder^*c_j$ or $b_j\progorder^*c_i$.
In the former case, $\tau$ has the following happens-before cycle:
\begin{align*}
&a_1\equivalenceorder^* b_1\progorder^*c_1\equivalenceorder^* d_1\cycorder\ldots\cycorder a_i \equivalenceorder^*b_i\progorder^*c_j\equivalenceorder^* d_j\cycorder\ldots\\
&\hspace{7cm}
\cycorder a_n\equivalenceorder^*b_n\progorder^*c_n\equivalenceorder^* d_n\cycorder a_1.
\end{align*}
In the latter case, $\tau$ has the following cycle:
\begin{align*}
a_j\equivalenceorder^*b_j\progorder^*c_i\equivalenceorder^* d_i\cycorder\ldots\cycorder a_{j-1}\equivalenceorder^*b_{j-1}\progorder^*c_{j-1}\equivalenceorder^*d_{j-1}\cycorder a_j.
\end{align*}
Repeating the procedure for the new cycle until there is no $i\neq j$ with $\rankOf{a_i}=\ldots=\rankOf{d_i}=\rankOf{a_j}=\ldots=\rankOf{d_j}$, we get a cycle of the desired form.
\endproofatend

\begin{example}
The computations $\tau_{\onetoone}$ (Example~\ref{Example:onetooneComputation})
and $\tau\doubleprime_{\onetoone}$ (Example~\ref{Example:onetooneReshuffle}) have
a cycle of the form~\eqref{Tag:CyclicChain} depicted in Figure~\ref{Figure:onetooneTrace}: $n=2$, $a_1=b_1=\nodeone{\barrierkind}$, $c_1=d_1=\nodeone{\loadkind}$,  $a_2=\nodetwo{\popbkind}$, $b_2=\nodetwo{\writekind}$, $c_2=d_2=\nodetwo{\barrierkind}$.
\end{example}

Note that $d_i\equivalenceorder a_{i+1}$ means both are barriers, $\kindOf{d_i}=\barrierkind=\kindOf{a_{i+1}}$. This holds as the ranks are different.
In spite of the additional restrictions, cycles~\eqref{Tag:CyclicChain} are not trivial to recognize.
The reason is that the events constituting the cycle are not necessarily contained in the computation in the order in which they appear in the cycle, see Figure~\ref{Figure:onetooneCycleOnComputation}.
The idea of our cycle detection is to first guess the events $a_i$ and $d_i$ for each process and then check that $d_i\conflictorder{}a_{i+1}$ holds.
The former can be accomplished by an extension $\wsautomatonprime(\program, \nodecount)$ of the multiheaded automaton $\wsautomaton(\program, \nodecount)$, the latter by a regular intersection.

The automaton $\wsautomatonprime(\program, \nodecount)$ accepts computations over the alphabet $\events\times\markings$ with $\markings:=\powerset{\set{\enterguess,\leaveguess}}$.
The events marked by $\enterguess$ are the guessed $a_i$ events in \eqref{Tag:CyclicChain} and
those marked by $\leaveguess$ are the $d_i$ events in \eqref{Tag:CyclicChain}.
We still have to guarantee we only mark $a_i$ and $d_i$ that satisfy $a_i\equivalenceorder^*b_i\progorder^*c_i\equivalenceorder^*d_i$.
This is straightforward thanks to the fact that $\wsautomaton(\program, \nodecount)$ generates
the events of each process in program order, and generates events related by $\equivalenceorder$ in one shot.
The full construction of $\wsautomatonprime(\program, \nodecount)$ is given in the appendix.

\lemmaatend
Now we formally define the automaton $\wsautomatonprime(\program, \nodecount)$, which is an extension of $\wsautomaton(\program, \nodecount)$ that non-deterministically guesses and marks the first and the last event in each process that contribute to a cycle --- if any.
We set $\wsautomatonprime(\program, \nodecount):=(\wsstatesprime,\events\times\markings,\wstransitionsprime,\initialwsstateprime,\finalwsstatesprime)$,
where events are optionally marked by $\enterguess$ and/or $\leaveguess$ from $\markings:=\powerset{\set{\enterguess,\leaveguess}}$.
The events marked by $\enterguess$ are the $a_i$ events in \eqref{Tag:CyclicChain}
and those marked by $\leaveguess$ are the $d_i$ events in \eqref{Tag:CyclicChain}.
The set of states $\wsstatesprime$ consists of the states $\wsstates$ extended by information about which marked events have been issued for each process: $\wsstatesprime:=\wsstates\times\set{\bottom,\enterguess,\leaveguess}^{\ranks}$.
The initial state is $\initialwsstateprime:=(\initialwsstate,\initialmarkingfunc)$ with $\initialmarkingfunc(\rank):=\ \bottom$ for each rank.
The transition relation $\wstransitionsprime$ is defined as follows:

\begin{description}
\item[M1] $(\state,\markingfunc)\transitionto{\emptysequence}(\state',\markingfunc)$ if $\state\transitionto{\emptysequence}\state'$.
\item[M2] $(\state,\markingfunc)\transitionto{i,(\event,\emptyset)}(\state',\markingfunc)$ if $\state\transitionto{i,\event}\state'$.
\item[M3] $(\state,\markingfunc)\transitionto{i,(\event,\set{\enterguess})}(\state',\markingfunc[\rankOf{\event}:=\enterguess])$ if $\state\transitionto{i,\event}\state'$, $\addrOf{\event}\neq\ \bottom$ or $\kindOf{\event}=\barrierkind$, and $\markingfunc(\rankOf{\event})=\ \bottom$.
\item[M4] $(\state,\markingfunc)\transitionto{i,(\event,\set{\enterguess,\leaveguess})}(\state',\markingfunc[\rankOf{\event}:=\leaveguess])$ if $\state\transitionto{i,\event}\state'$, $\addrOf{\event}\neq\ \bottom$ or $\kindOf{\event}=\barrierkind$, and $\markingfunc(\rankOf{\event})=\ \bottom$.
\item[M5] $(\state,\markingfunc)\transitionto{i,(\event,\set{\leaveguess})}(\state,\markingfunc[\rankOf{\event}:=\leaveguess])$ if $\state\transitionto{i,\event}\state'$, $\addrOf{\event}\neq\ \bottom$ or $\kindOf{\event}=\barrierkind$, and $\markingfunc(\rankOf{\event})=\enterguess$.
\item[M6] $(\state,\markingfunc)\transitionto{i,(\event_1,\set{\leaveguess})}\state_{\textit{aux}}\transitionto{j,(\event_2,\set{\enterguess})}(\state',\markingfunc[\rankOf{\event_1}:=\leaveguess])$ if $\state\transitionto{i,\event_1}\state_{\textit{aux}}\transitionto{j,\event_2}\state'$, $\kindOf{\event_1}=\popakind$, $\kindOf{\event_2}=\popbkind$, and $\markingfunc(\rankOf{\event_1})=\ \bottom$.
\end{description}

The set of final states is $\finalwsstatesprime:=\setcond{(\state,\markingfunc)}{\state\in\finalwsstates\text{ and }\markingfunc(\rank)\in\set{\bottom,\leaveguess}\text{ for all }\rank\in\ranks}$.

\begin{lemma}\label{Lemma:MarkedWSAutomatonWorksAsBefore}
The languages of $\wsautomaton(\program, \nodecount)$ and $\wsautomatonprime(\program, \nodecount)$ match up to the markings: $\langOf{\wsautomaton(\program, \nodecount)}=\projectionOf{\langOf{\wsautomatonprime(\program, \nodecount)}}{\events}$.
\end{lemma}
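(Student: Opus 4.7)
The plan is to prove the two inclusions separately, exploiting the fact that $\wsautomatonprime(\program, \nodecount)$ is a decorated simulation of $\wsautomaton(\program, \nodecount)$: every transition rule~M1--M6 is defined by lifting one or two underlying transitions of $\wsautomaton(\program, \nodecount)$, and the marking component $\markingfunc$ is only read to gate transitions, never used to change which events can be produced.

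For the inclusion $\projectionOf{\langOf{\wsautomatonprime(\program, \nodecount)}}{\events}\subseteq\langOf{\wsautomaton(\program, \nodecount)}$, I would take an accepting run
\[
\initialwsstateprime\transitionto{\sigma'}(\state,\markingfunc)\in\finalwsstatesprime
\]
and produce the projected word $\sigma:=\projectionOf{\sigma'}{\events}$ together with a run of $\wsautomaton(\program, \nodecount)$ on it by induction on the length of $\sigma'$. At each step one of the rules~M1--M6 was applied; by inspection, each of these rules has a hypothesis of the form $\state\transitionto{i,\event}\state'$ (or, in the case of M6, a two-step hypothesis $\state\transitionto{i,\event_1}\state_{\mathit{aux}}\transitionto{j,\event_2}\state'$) in $\wsautomaton(\program, \nodecount)$, with the same state component in source and target. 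So every step of the marked run is mirrored by the corresponding one or two steps in the unmarked automaton on the projected letters, and the two runs traverse the same sequence of $\wsautomaton$-states $\state$. Since $(\state,\markingfunc)\in\finalwsstatesprime$ enforces $\state\in\finalwsstates$, the projected run is accepting.

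For the converse inclusion $\langOf{\wsautomaton(\program, \nodecount)}\subseteq\projectionOf{\langOf{\wsautomatonprime(\program, \nodecount)}}{\events}$, I would take any accepting run $\initialwsstate\transitionto{\sigma}\state\in\finalwsstates$ in $\wsautomaton(\program, \nodecount)$ and lift it step by step to a run of $\wsautomatonprime(\program, \nodecount)$ using rule~M2 (which attaches the empty marking $\emptyset$) for every labelled transition and rule~M1 for every $\emptysequence$-transition. The marking function stays at the initial $\initialmarkingfunc\equiv\bottom$, so the condition $\markingfunc(\rank)\in\set{\bottom,\leaveguess}$ of $\finalwsstatesprime$ is satisfied and acceptance is preserved. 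Stripping the empty markings with $\projectionOf{\cdot}{\events}$ recovers $\sigma$.

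The main thing to be careful about is rule~M6, which compresses two underlying $\wsautomaton$-steps into a single marked step: in the $\supseteq$ direction one must correspondingly lift any two consecutive $\wsautomaton$-transitions to a single M6-step only when the markings demand it, which is not needed when one simply wants to exhibit \emph{some} lifted run — here always choosing M1/M2 suffices. Conversely, in the $\subseteq$ direction one has to remember to expand an M6-transition into its two $\wsautomaton$-constituents when projecting. Both of these are purely mechanical once one notices that the marking component is orthogonal to event generation.
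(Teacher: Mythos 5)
Your proof is correct and follows essentially the same route as the paper's: one inclusion by lifting an unmarked run via rules M1/M2 with all-$\bottom$ markings (which satisfy the final-state condition), and the other by observing that every marked transition M2--M6 has the corresponding unmarked $\wsautomaton$-transitions as a hypothesis, so projecting (with M6 expanded into its two constituent steps) yields an accepting run. Your additional remarks about M6 just make explicit what the paper leaves implicit.
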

\begin{proof}
The inclusion $\langOf{\wsautomaton(\program, \nodecount)}\subseteq\projectionOf{\langOf{\wsautomatonprime(\program, \nodecount)}}{\events}$ holds due to the Rules~\textbf{M1} and~\textbf{M2} in the definition of $\wstransitionsprime$.
The reverse inclusion $\langOf{\wsautomaton(\program, \nodecount)}\supseteq\projectionOf{\langOf{\wsautomatonprime(\program, \nodecount)}}{\events}$ follows from the fact that $(\state,\markingfunc)\transitionto{i,\event,\marking}(\state',\markingfunc')$ requires $\state\transitionto{i,\event}\state'$ (\textbf{M2}-\textbf{M6}).
\end{proof}

\begin{lemma}\label{Lemma:MarkedWSAutomatonGuesses}
Consider a marked computation $\tau\in\langOf{\wsautomatonprime(\program, \nodecount)}$ and events $(a,\marking_1)$ and $(d,\marking_4)$ in $\tau$ with $\rankOf{a}=\rankOf{d}$ and $\enterguess\in\marking_1$, $\leaveguess\in\marking_4$.
Then $a\equivalenceorder^*b\progorder^*c\equivalenceorder^*d$ for some $(b,\marking_2)$ and $(c,\marking_3)$ in $\tau$.
\end{lemma}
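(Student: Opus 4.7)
The plan is to trace, for the rank $\rank := \rankOf{a} = \rankOf{d}$, which of the rules M3--M6 fired and in what order. Starting from $\markingfunc(\rank) = \bottom$, rule M3 raises the marking to $\enterguess$ (after which only M5 may fire, taking it to $\leaveguess$), whereas M4 and M6 jump directly from $\bottom$ to $\leaveguess$. Once the marking is $\leaveguess$, none of M3--M6 is applicable on this rank anymore. Since both $\enterguess \in \marking_1$ on $a$ and $\leaveguess \in \marking_4$ on $d$ were issued, exactly one of three scenarios occurred on rank $\rank$: \emph{(A)} M3 marked $a$ and subsequently M5 marked $d$; \emph{(B)} M4 marked a single event and $a = d$; or \emph{(C)} M6 fired with $d$ being its popa event and $a$ the immediately following popb event of the same macro.

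For each event $e$ on rank $\rank$, define $\overline{e}$ to be the read/write event opening $e$'s macro if $\kindOf{e} \in \set{\popakind, \popbkind}$, and $\overline{e} := e$ otherwise. Then $e \equivalenceorder^* \overline{e}$ and, by Lemma~\ref{Lemma:GeneratesPopsImmediately}, $\overline{e}$ lies in the contiguous block of $\sigma$ generating $e$ --- this atomicity holds because the auxiliary states $\state_{\mathit{aux}1}, \state_{\mathit{aux}2}$ in Rules $(\readkind')$ and $(\writekind')$ are unique per application and admit no interleaved transitions. I set $b := \overline{a}$ and $c := \overline{d}$, so that $a \equivalenceorder^* b$ and $c \equivalenceorder^* d$ are immediate.

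It remains to prove $b \progorder^* c$ in $\tau$. Scenarios \emph{(B)} and \emph{(C)} yield $\overline{a} = \overline{d}$ directly (the shared event, and the common read/write of the macro, respectively), so the chain closes with $b = c$. In scenario \emph{(A)}, $a$ strictly precedes $d$ in $\sigma$; a brief case split on which of $a, d$ is a pop shows that either the two share a macro, or $a$ coincides with the read/write opening $d$'s macro (forcing $\overline{a} = \overline{d}$), or else $\overline{a}$ strictly precedes $\overline{d}$ in $\sigma$. In this last case, $\overline{a}$ and $\overline{d}$ are non-pop events of rank $\rank$ generated with head $1$, so Lemma~\ref{Lemma:GeneratesInProgramOrder} delivers $\overline{a} \progorder^+ \overline{d}$. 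The technical point to guard throughout is the contiguity of the read/write--popa--popb macros in $\sigma$: without it, a foreign pop could slide between a macro's read/write and one of its pops and invalidate the translation of positions of $a, d$ to those of $\overline{a}, \overline{d}$. Uniqueness of the auxiliary states renders this contiguity immediate, after which the case analysis is routine.
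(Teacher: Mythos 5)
Your proof is correct and follows essentially the same route as the paper's: a case analysis on which of the marking rules M3--M6 fired (matching the paper's split on the relative positions of $a$ and $d$ in $\sigma$), taking $b,c$ to be the $\equivalenceorder$-related non-pop events, and invoking Lemma~\ref{Lemma:GeneratesPopsImmediately} and Lemma~\ref{Lemma:GeneratesInProgramOrder} for the $\progorder^*$ link. Your explicit treatment of the shared-macro sub-case (where $b=c$) is slightly more careful than the paper's, but it is the same argument.
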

\begin{proof}
Consider $\initialwsstateprime\transitionto{\sigma}\stateOf{\wsautomatonprime}$ and let $\tau=\computOf{\sigma}$.
Let $(a,\marking_1)$ and $(d,\marking_4)$ be two events in $\tau$ with $\rankOf{a}=\rankOf{d}$ and $\enterguess\in\marking_1$, $\leaveguess\in\marking_4$.
Then, $\sigma$ contains $(i,a,\marking_1)$ and $(j,d,\marking_4)$ for some $i,j\in\parts$.
\begin{itemize}
\item
If $(i,a,\marking_1)\after{\sigma}(j,d,\marking_2)$, then $a$ and $d$ were generated by the two transitions defined by Rule~\textbf{M6}.
This means $\sigma=\sigma_1\cdot(1,b,\emptyset)\cdot(j,d,\set{\leaveguess})\cdot(i,a,\set{\enterguess})\cdot\sigma_2$, where $\kindOf{b}\in\set{\readkind,\writekind}$, $\kindOf{d}=\popakind$, $\kindOf{a}=\popbkind$.
Therefore, $b\equivalenceorder d\equivalenceorder a$, which can be reformulated as $a\equivalenceorder^*b\progorder^*b\equivalenceorder^*d$.

\item
If $(i,a,\marking_1)=(j,d,\marking_4)$, then $\marking_1=\marking_4=\set{\enterguess,\leaveguess}$ and $a=d$ is the event generated by~\textbf{M4}. Clearly, $a\equivalenceorder^*a\progorder^*d\equivalenceorder^*d$.

\item
If $(i,a,\marking_1)\before{\sigma}(j,d,\marking_4)$, then $a$ was generated by \textbf{M3}, and $d$ was generated by Rule~\textbf{M5}.
Let $b\equivalenceorder^*a$ with $\kindOf{b}\notin\{\popakind, \popbkind\}$, and similarly $c\equivalenceorder^* d$.
The fact that $b\progorder c$ follows from Lemma~\ref{Lemma:GeneratesPopsImmediately} and Lemma~\ref{Lemma:GeneratesInProgramOrder}.
Altogether, $a\equivalenceorder^*b\progorder^*c\equivalenceorder^*d$.
\end{itemize}
\end{proof}

For the next lemma, consider a normal-form computation $\tau\in\computationsOf{\program, \nodecount}$ and let $\set{\rank_1\ldots\rank_n}\subseteq\ranks$ be a set of ranks.
Moreover, assume that for each rank $\rank_i$ with $i\in \overline{1, n}$, there are $a_i, b_i, c_i, d_i\in \tau$ that have this rank, satisfy $a_i\equivalenceorder^*b_i\progorder^*c_i\equivalenceorder^*d_i$, and where
\begin{align*}
(\addrOf{a_i}\neq\ \bottom\text{ or }\kindOf{a_i}=\barrierkind)\text{ and }(\addrOf{d_i}\neq\ \bottom\text{ or }\kindOf{d_i}=\barrierkind).
\end{align*}
\begin{lemma}\label{Lemma:MarkedWSAutomatonGuessesEverything}
Under these assumptions, there is a marked computation $\tau'\in\langOf{\wsautomatonprime(\program, \nodecount)}$ with $\projectionOf{\tau'}{\events}=\tau$ that contains, for each $i\in\overline{1,n}$, a marked event $(a_i,\marking_1)$ with $\enterguess\in\marking_1$ and $(d_i,\marking_4)$ with $\leaveguess\in\marking_4$.
All other marked events $(\event,\marking)\in\tau$ have $\marking=\emptyset$.
\end{lemma}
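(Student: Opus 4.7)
The plan is to start from a $\wsautomaton(\program,\nodecount)$-generating sequence $\sigma$ of $\tau$ and lift it to a marked run of $\wsautomatonprime(\program,\nodecount)$ by inserting marks at the appropriate steps. Such a $\sigma$ with $\initialwsstate\transitionto{\sigma}\stateOf{\wsautomaton}\in\finalwsstates$ and $\computOf{\sigma}=\tau$ exists by Lemma~\ref{Lemma:WSAutomatonWorks}. I will mimic $\sigma$ step by step in $\wsautomatonprime$, applying \textbf{M2} (empty mark) at every event not singled out as some $a_i$ or $d_i$, and applying exactly one rule among \textbf{M3}--\textbf{M6} per rank $\rank_i$.

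The choice of rule for rank $\rank_i$ will depend on the relative order in $\sigma$ of the steps generating $a_i$ and $d_i$. If $a_i=d_i$, I will apply \textbf{M4} at that step. If $a_i$ is generated strictly before $d_i$, I will mark $a_i$ with \textbf{M3} and $d_i$ with \textbf{M5}; the precondition $\markingfunc(\rank_i)=\enterguess$ of \textbf{M5} will hold because \textbf{M3} has just set it. Otherwise $d_i$ is generated strictly before $a_i$, and I will apply \textbf{M6} to the two-step segment of $\sigma$ that emits $d_i$ immediately followed by $a_i$.

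The main obstacle will be to justify this last case, namely that whenever $d_i$ precedes $a_i$ in $\sigma$ the two events necessarily sit as a $\popakind$ immediately followed by the $\popbkind$ of the same W/R, so that \textbf{M6} is indeed applicable. The chain $a_i\equivalenceorder^*b_i\progorder^*c_i\equivalenceorder^*d_i$ with $b_i,c_i$ non-pops (by definition of $\progorder$) forces $b_i$ to be generated at or before $c_i$ in $\sigma$ (Lemma~\ref{Lemma:GeneratesInProgramOrder}), while pops are always emitted inside the triple W/R-$\popakind$-$\popbkind$ of a single command in that order (Lemma~\ref{Lemma:GeneratesPopsImmediately}). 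A case analysis on the kinds of $a_i,b_i,c_i,d_i$ using these two facts will show that $d_i$ can precede $a_i$ only when $b_i=c_i$ is a single W/R command and $d_i,a_i$ are its $\popakind,\popbkind$ in that order. Here the hypothesis $\addrOf{a_i}\neq\bottom$ or $\kindOf{a_i}=\barrierkind$ (and its analogue for $d_i$) is essential: it rules out the otherwise awkward subcase where $d_i$ would coincide with the bare W/R event, which \textbf{M6} cannot handle.

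To conclude I will verify acceptance. The underlying $\wsautomaton$-state reached is $\stateOf{\wsautomaton}\in\finalwsstates$ by the choice of $\sigma$, and by construction every rank ends with marking $\bottom$ or $\leaveguess$ (each of \textbf{M4}--\textbf{M6} sets $\leaveguess$), so the final-state condition of $\wsautomatonprime$ is met. The projection $\projectionOf{\tau'}{\events}=\tau$ follows because every rule in the lift fires the underlying $\wsautomaton$-transition; only the non-singled-out events receive the empty mark, exactly as required by the statement.
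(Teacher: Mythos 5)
Your proposal is correct and follows essentially the same route as the paper: start from an unmarked $\wsautomaton$-run of $\tau$, replace the \textbf{M2}-steps at the $a_i$ and $d_i$ by \textbf{M3}/\textbf{M4}/\textbf{M5}/\textbf{M6} according to their relative order in $\sigma$, and handle the ``$d_i$ before $a_i$'' case via Lemmas~\ref{Lemma:GeneratesInProgramOrder} and~\ref{Lemma:GeneratesPopsImmediately} plus the address/barrier hypothesis to see that they are an adjacent $\popakind$--$\popbkind$ pair, so \textbf{M6} applies. The only cosmetic difference is that the paper organizes the replacement as an induction on the number of ranks, whereas you mark all ranks in one pass, which is equally valid since the rules only consult $\markingfunc$ at the rank of the event being marked.
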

\begin{proof}
We prove the statement of the lemma by induction on the size $n$ of the set of ranks.
The base case $n=0$ is due to Lemma~\ref{Lemma:WSAutomatonGeneratesAllWellShapedComputations} and the Rules~\textbf{M1} and~\textbf{M2}: $\wsautomatonprime(\program, \nodecount)$ can generate a marked computation $\tau_0$ with $\projectionOf{\tau_0}{\events}=\tau$ and all markings being $\emptyset$.
Formally, there is $\initialwsstateprime\transitionto{\sigma_0}\stateOf{\wsautomatonprime}$ for some $\stateOf{\wsautomatonprime}\in\finalwsstatesprime$ with $\tau_0=\computOf{\sigma_0}$.

In the induction step, assume the claim holds for sets of ranks of size $n-1$ and consider $\{\rank_1,\ldots, \rank_n\}\subseteq \ranks$. By the hypothesis, there is $\initialwsstateprime\transitionto{\sigma_{n-1}}\stateOf{\wsautomatonprime}$ for some $\stateOf{\wsautomatonprime}\in\finalwsstatesprime$.
Moreover, for each $i\in\overline{1,n-1}$ it holds that $\tau_{n-1}=\computOf{\sigma_{n-1}}$ contains a marked event $(a_i,\marking_1)$ with $\enterguess\in\marking_1$ and a marked event $(d_i,\marking_4)$ with $\leaveguess\in\marking_4$.
All other events in $\tau_{n-1}$ have empty markings.
To prove the statement for $n$, consider the possible mutual dispositions of $(i,a_n,\emptyset)$ and $(j,d_n,\emptyset)$ in $\sigma_{n-1}$.
\begin{itemize}
\item
If $(i,a_n,\emptyset)$ and $(j,d_n, \emptyset)$ are the same event, we have $\sigma_{n-1}=\sigma'\cdot(i,a_n,\emptyset)\cdot\sigma''$ and $(i,a_n,\emptyset)$ was generated by Rule~\textbf{M2}.
This transition can be replaced by~\textbf{M4} and yields $\sigma_n=\sigma'\cdot(i,a_n,\set{\enterguess,\leaveguess})\cdot \sigma''$.

\item
If $(i,a_n,\emptyset)\before{\sigma_{n-1}}(j,d_n,\emptyset)$, then $\sigma_{n-1}=\sigma'\cdot(i,a_n,\emptyset)\cdot\sigma''\cdot(j,d_n,\emptyset)\cdot\sigma'''$, where $(i,a_n,\emptyset)$ and $(j,d_n,\emptyset)$ were generated by~\textbf{M2}. These transitions can be replaced by~\textbf{M3} and~\textbf{M5} transitions, resulting in $\sigma_n=\sigma'\cdot(i,a_n,\set{\enterguess})\cdot\sigma''\cdot(j,d_n,\set{\leaveguess})\cdot
\sigma'''$.

\item
Consider $(i,a_n,\emptyset)\after{\sigma_{n-1}}(j,d_n,\emptyset)$.
With  Lemma~\ref{Lemma:GeneratesInProgramOrder} and Lemma~\ref{Lemma:GeneratesPopsImmediately}, we get $d_n\equivalenceorder a_n$.
Since barriers are not related by identity, we derive $\addrOf{d_n}\neq\ \bottom\ \neq \addrOf{a_n}$. This gives $\kindOf{a_n}=\popbkind$ and $\kindOf{d_n}=\popakind$. With Lemma~\ref{Lemma:GeneratesPopsImmediately},  $\sigma_{n-1}=\sigma'\cdot(j,d_n,\emptyset)\cdot(i,a_n,\emptyset)\cdot\sigma''$.
The events were generated by~\textbf{M2} transitions.
These transitions can be replaced by~\textbf{M6}, which yields $\sigma_n=\sigma'\cdot(j,d_n,\set{\leaveguess})\cdot(i,a_n,\set{\enterguess})\cdot\sigma''$.
\end{itemize}

Since $\sigma_n$ is obtained from $\sigma_{n-1}$ by replacing one or two marked events of rank $\rank_n$, and generation of the other events does not rely on $\markingfunc(\rank_n)$ (all other events of rank $\rank_n$ are not marked), we have $\initialwsstateprime\transitionto{\sigma_n}\stateOf{\wsautomatonprime}$ for some $\stateOf{\wsautomatonprime}\in \finalwsstatesprime$.
\end{proof}
\endlemmaatend

\begin{example}
Consider the normal-form computation $\tau\doubleprime_{\onetoone}$ (Example~\ref{Example:onetooneReshuffle}) that has the cycle~\eqref{Tag:CyclicChain} given in Figure~\ref{Figure:onetooneTrace}.
A corresponding marked computation of $\wsautomatonprime(\program, \nodecount)$ is
\begin{align*}
&(\nodeone{\writekind},\emptyset)\cdot(\nodeone{\popakind},\emptyset)\cdot(\nodetwo{\writekind},\emptyset)\cdot(\nodetwo{\popakind},\emptyset)\cdot\\
&\hspace{2cm}
(\nodeone{\barrierkind},\set{\enterguess})\cdot(\nodetwo{\barrierkind},\set{\leaveguess})\cdot(\nodeone{\loadkind},\set{\leaveguess})\cdot(\nodeone{\popbkind},\emptyset)\cdot(\nodetwo{\popbkind},\set{\enterguess}).
\end{align*}
\end{example}

Every cycle of the form \eqref{Tag:CyclicChain} has a \emph{cycle type} $\cycletype$, which is a sequence $\cycletype = \rank_{1}\ldots \rank_{k}$ of ranks from $\overline{1,\nodecount}$ with $\rank_{i}\neq \rank_{j}$ for $i\neq j$. The idea is that the events $a_i, b_i, c_i, d_i$ belong to rank $\rank_i$.
For each pair  $\rank_i, \rank_{i+1}$ in this sequence, we construct a finite automaton $\hbautomaton^{\rank_i,\rank_{i+1}}$ over the alphabet $\events\times \markings$.
It checks whether there is a conflict or identity edge from the $\leaveguess$-marked event of process $\rank_{i}$ to the $\enterguess$-marked event of process $\rank_{i+1}$.
Consider the case of conflicts. The automaton looks for a marked event $(\event_{i}, \marking_{i})$ with $\rankOf{\event_{i}}=\rank_{i}$ marked by $\leaveguess\in\marking_{i}$.
It remembers the kind and the address of this event.
Then, it seeks a marked event $(\event_{i+1},\marking_{i+1})$ with $\rankOf{\event_{i+1}}=\rank_{i+1}$ marked by $\enterguess\in\marking_{i+1}$.
If both events are found, they touch the same address, and one of them is a write, the automaton reaches the accepting state.
Since finite automata are closed under intersection, we can define the \emph{finite automaton of cycle type $\cycletype$} as
$\hbautomaton^{\cycletype}:=\hbautomaton^{\rank_1, \rank_2}\cap\ldots\cap \hbautomaton^{\rank_{k-1}, \rank_k}\cap \hbautomaton^{\rank_k, \rank_1}$.

\lemmaatend
Now we formally define the automaton $\hbautomaton^{\rank_1,\rank_2}$ that checks whether there is a conflict edge from the $\leaveguess$-marked event of process $\rank_1$ to the $\enterguess$-marked event of process $\rank_2$.
We define $\hbautomaton^{\rank_1,\rank_2}:=(\hbstates,\events\times\markings,\hbtransitions,\initialhbstate,\finalhbstates)$.
The set of states $\hbstates:=\set{\hbinit,\hbaccept}\cup(\kinddomain\times\ranks\times\addrdomain)$.
The initial state is $\initialhbstate:=\hbinit$.
The set of final states is $\finalhbstates:=\set{\hbaccept}$.
The transition relation $\hbtransitions$ is defined as follows:

\begin{description}
\item[HB1] $\hbinit\transitionto{\event,\marking}\hbinit$ with $\rankOf{\event}\neq\rank_1$ or $\enterguess\not\in{}\marking$.
\item[HB2] $\hbinit\transitionto{\event,\marking}(\kindOf{\event}, \addrOf{\event})$ for $\kindOf{\event}\neq \barrierkind$ if $\rankOf{\event}=\rank$ and $\leaveguess\in\marking$.
\item[HB3] $(\kind,\rank,\anaddr)\transitionto{\event,\marking}(\kind,\rank,\anaddr)$ for $\kind\neq \barrierkind$ if $\addrOf{\event}\neq(\rank, \anaddr)$ or $\kindOf{\event}\not\in\set{\storekind,\popbkind}$.
\item[HB4] $(\kind,\rank,\anaddr)\transitionto{\event,\marking}(\hbaccept)$ for $\kind\neq \barrierkind$ if $\addrOf{\event}=(\rank,\anaddr)$, $\rankOf{\event}=\rank_2$, $\enterguess\in\marking$, and $\set{\kind,\kindOf{\event}}\cap\set{\storekind,\popbkind}\neq\emptyset$.
\item[HB5] $\hbaccept\transitionto{\event,\marking}\hbaccept$ for all $(\event,\marking)\in\events\times\markings$.
\item[HB6] $\hbinit\transitionto{\event,\marking}\hbbarrier$ if $\kindOf{\event}=\barrierkind$, ($\rankOf{\event}=\rank_1$ and $\leaveguess\in\marking$) or ($\rankOf{\event}=\rank_2$ and $\enterguess\in\marking$).
\item[HB7] $\hbbarrier\transitionto{\event,\marking}\hbbarrier$ if $\kindOf{\event}=\barrierkind$, $\rankOf{\event}\not\in\set{\rank_1,\rank_2}$.
\item[HB8] $\hbbarrier\transitionto{\event,\marking}\hbaccept$ if $\kindOf{\event}=\barrierkind$, ($\rankOf{\event}=\rank_1$ and $\leaveguess\in\marking$) or ($\rankOf{\event}=\rank_2$ and $\enterguess\in\marking$).
\end{description}

\begin{lemma}\label{Lemma:HBAutomatonWorks}
Consider $\rank_1,\rank_2\in\ranks$ and $\tau\in\langOf{\wsautomatonprime(\program, \nodecount)}$ that has a single marked event $(\event_i,\marking_i)$ with $\leaveguess\in\marking_i$ and $\rankOf{\event_i}=\rank_1$ and a single $(\event_j,\marking_j)$ with $\enterguess\in\marking_j$ and $\rankOf{\event_j}=\rank_2$.
Then $\tau\in\langOf{\hbautomaton^{\rank_1,\rank_2}}$ iff $\event_i\cycorder\event_j$.
\end{lemma}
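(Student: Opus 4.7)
The plan is to prove both directions of the iff by analyzing the accepting runs of $\hbautomaton^{\rank_1,\rank_2}$. Inspection of the transition rules shows that any path from $\hbinit$ to $\hbaccept$ takes one of two canonical shapes: a \emph{conflict path} that loops at $\hbinit$ via \textbf{HB1}, advances by \textbf{HB2} to some state $(\kind,\rank,\anaddr)$, loops there via \textbf{HB3}, advances by \textbf{HB4} to $\hbaccept$, and loops at $\hbaccept$ via \textbf{HB5}; or a \emph{barrier path} that advances from $\hbinit$ by \textbf{HB6} to $\hbbarrier$, loops via \textbf{HB7}, and advances by \textbf{HB8} to $\hbaccept$. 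I walk along these shapes in each direction.

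For the reverse direction ($\Leftarrow$), suppose $\event_i\cycorder\event_j$. If $\event_i\conflictorder\event_j$, then neither event is a barrier, $\addrOf{\event_i}=\addrOf{\event_j}$, one of the two is a write, $\event_i$ precedes $\event_j$ in $\tau$, and no intermediate event writes to this address. I construct an accepting run by applying \textbf{HB1} on the prefix up to $\event_i$ (valid because $\event_i$ is the unique $\leaveguess$-marked rank-$\rank_1$ event), \textbf{HB2} at $\event_i$ to move to $(\kindOf{\event_i},\addrOf{\event_i})$, \textbf{HB3} on the events strictly between $\event_i$ and $\event_j$, \textbf{HB4} at $\event_j$ to jump to $\hbaccept$, and \textbf{HB5} on the suffix. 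If instead $\event_i\equivalenceorder\event_j$, both events are matching barriers, and by rule $\barrierkind'$ of $\wsautomatonprime(\program,\nodecount)$ all $\nodecount$ barriers of one barrier command are emitted consecutively through auxiliary states; hence the events between $\event_i$ and $\event_j$ in $\tau$ are barriers whose ranks are neither $\rank_1$ nor $\rank_2$. I use \textbf{HB1} on the prefix, \textbf{HB6} on the first of $\event_i,\event_j$ in $\tau$, \textbf{HB7} on the intermediate barriers, \textbf{HB8} on the second, and \textbf{HB5} on the suffix.

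For the forward direction ($\Rightarrow$), fix an accepting run and determine which canonical shape it takes. In the conflict shape, \textbf{HB2} must consume the unique $\leaveguess$-marked rank-$\rank_1$ event, which is $\event_i$, recording $(\kindOf{\event_i},\addrOf{\event_i})$; the \textbf{HB3} self-loop forbids any intermediate write to $\addrOf{\event_i}$; and \textbf{HB4} must consume the unique $\enterguess$-marked rank-$\rank_2$ event $\event_j$ with matching address and with a write among the two. This is exactly the definition of $\event_i\conflictorder\event_j$. In the barrier shape, \textbf{HB6} and \textbf{HB8} consume the two marked barriers while \textbf{HB7} admits only intermediate barriers of other ranks; uniqueness of the marked events together with atomicity of barrier blocks forces $\event_i$ and $\event_j$ to belong to the same barrier command, yielding $\event_i\equivalenceorder\event_j$.

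The main obstacle is the barrier case. One must justify that the segment of $\tau$ between $\event_i$ and $\event_j$ contains only barriers whose ranks lie outside $\set{\rank_1,\rank_2}$, and that these barriers together with $\event_i,\event_j$ form a single barrier command so that $\equivalenceorder$ indeed holds. Both facts rest on rule $\barrierkind'$, which emits the $\nodecount$ barrier events of one synchronization in a single atomic sequence through auxiliary states, and on the marking rules \textbf{M1}--\textbf{M6} lifting $\wsautomaton(\program,\nodecount)$ transitions to $\wsautomatonprime(\program,\nodecount)$ without interleaving other events. If $\event_i$ and $\event_j$ belonged to different barrier commands, the intermediate block would contain further barriers of ranks $\rank_1$ and $\rank_2$, which \textbf{HB7} rejects; this excludes that possibility and pins the two marked barriers to one command.
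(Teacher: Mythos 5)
Your treatment of the conflict case is essentially the paper's own proof: the same decomposition of an accepting run into the \textbf{HB1}--\textbf{HB5} shape, with \textbf{HB2} and \textbf{HB4} forced onto the unique marked events and \textbf{HB3} excluding an intervening write, and the same explicit run construction for the converse. Your reverse barrier direction is also sound: since rule~($\barrierkind'$) emits the $\nodecount$ barrier events of one command contiguously (all on head~1, through auxiliary states), only barriers of ranks outside $\set{\rank_1,\rank_2}$ separate $\event_i$ from $\event_j$, so \textbf{HB6}--\textbf{HB8} and \textbf{HB5} give an accepting run. The paper, by contrast, proves only the memory-access case and declares the barrier case ``similar.''

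The gap is in your forward barrier case. The claim that two marked barriers from \emph{different} barrier commands would leave ``further barriers of ranks $\rank_1$ and $\rank_2$'' strictly between them is false when $\rank_1>\rank_2$ and the two barrier blocks are adjacent in $\tau$. Concretely, take three consecutive \texttt{barrier} commands on $\nodecount=2$ nodes, $\rank_1=2$, $\rank_2=1$; mark the first block's rank-$2$ barrier with $\set{\enterguess,\leaveguess}$ (rule \textbf{M4}), the second block's rank-$1$ barrier with $\set{\enterguess}$ (\textbf{M3}), and the third block's rank-$1$ barrier with $\set{\leaveguess}$ (\textbf{M5}). This marked word is accepted by $\wsautomatonprime(\program,\nodecount)$ and satisfies the lemma's hypotheses ($\event_i$ is the first block's rank-$2$ barrier, $\event_j$ the second block's rank-$1$ barrier), and $\hbautomaton^{\rank_1,\rank_2}$ accepts it via \textbf{HB6} immediately followed by \textbf{HB8} --- yet $\event_i$ and $\event_j$ belong to different barrier commands, so neither $\event_i\equivalenceorder\event_j$ nor $\event_i\conflictorder\event_j$ holds. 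So the exclusion step in your argument fails (and, in this corner, so does the literal statement you are proving; the paper sidesteps it by omitting the barrier case). The spurious acceptance is harmless for the overall construction, because the first block's rank-$\rank_2$ barrier still yields a happens-before path $\event_i\equivalenceorder\cdot\progorder\event_j$, but to make your proof go through you would have to either weaken the barrier half of the equivalence to the existence of such a path or argue separately why these adjacent-block acceptances do not matter downstream.
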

\begin{proof}
We give the proof for memory accesses, the argumentation in the case of barriers is similar.
We start with the implication from left to right.
In order to reach the accepting state $\hbaccept$ the first time, the automaton must have reached a state $(\kind,\rank,\anaddr)$ and performed a transition defined by~\textbf{HB4}.
This transition had to consume the symbol $(\event_j,\marking_j)$ which is, according to the statement of the lemma, the only marked event in $\tau$ with $\rankOf{\event_j}=\rank_2$ and $\enterguess\in\marking_j$.
The state $(\kind,\rank,\anaddr)$ was reached the first time via a transition defined by~\textbf{HB2}.
This transition had to consume the symbol $(\event_i,\marking_i)$ which is, according to the statement of the lemma, the only marked event in $\tau$ with $\rankOf{\event_i}=\rank_1$ and $\leaveguess\in\marking_i$.
According to \textbf{HB2}, $\kind=\kindOf{\event}$ and $(\rank,\anaddr)=\addrOf{\event}$.
Therefore, \textbf{HB4} requires that $\event_i$ and $\event_j$ access the same address and at least one of them is a write.
Moreover, according to Rule~\textbf{HB3}, the automaton could not consume a marked event which is a write to $(\rank,\anaddr)$ after reading $(\event_i,\marking_i)$ and before reading $(\event_j,\marking_j)$.
Altogether, by definition of the conflict relation, $\event_i\conflictorder\event_j$.

For the proof from right to left,
let $\tau=\tau_1\cdot(\event_i,\marking_i)\cdot\tau_2\cdot(\event_j,\marking_j)\cdot\tau_3$.
The first part, $\tau_1$, is read by the transitions defined by~\textbf{HB1}.
Indeed, $(\event_i,\marking_i)$ is the only marked event in $\tau$ that does not satisfy the requirements of this rule.
Then the automaton performs the transition defined by~\textbf{HB2}, reads $(\event_i,\marking_i)$, and reaches the state $(\kind, \rank, \anaddr)$ with $\kind=\kindOf{\event}$ and $(\rank,\anaddr)=\addrOf{\event}$.
Since $\event_i\conflictorder\event_j$, part $\tau_2$ does not contain writes to $\addrOf{\event_i}$. It is consumed by the transitions defined by~\textbf{HB3}.
Finally, the automaton performs the transition defined by~\textbf{HB4} and reaches the accepting state.
There it loops on the symbols from $\tau_3$.
\end{proof}

\begin{lemma}\label{Lemma:ComputationFromIntersectionIsViolating}
Consider a cycle type $\cycletype$ and let $\tau\in\projectionOf{\big(\langOf{\wsautomatonprime(\program, \nodecount)}\cap
\langOf{\hbautomaton^\cycletype}\big)}{\events}$.
Then $\tau$ is a computation of $(\program, \nodecount)$ and has a cyle~\eqref{Tag:CyclicChain} of type $\cycletype$.
\end{lemma}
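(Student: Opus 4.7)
The plan is to peel off the two languages separately: the multiheaded component gives us a genuine computation of the program, and the cycle-type component gives us the happens-before cycle.

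First I would fix some $\tau'\in\langOf{\wsautomatonprime(\program,\nodecount)}\cap\langOf{\hbautomaton^\cycletype}$ with $\projectionOf{\tau'}{\events}=\tau$. Applying Lemma~\ref{Lemma:MarkedWSAutomatonWorksAsBefore} and then Lemma~\ref{Lemma:WSAutomatonGeneratesValidComputations} yields $\tau\in\computationsOf{\program,\nodecount}$, which takes care of the first claim.

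Next I would extract the cycle. Since $\langOf{\hbautomaton^\cycletype}=\bigcap_{i}\langOf{\hbautomaton^{\rank_i,\rank_{i+1}}}$ (with indices mod $k$), the marked word $\tau'$ lies in every conjunct. For each pair $(\rank_i,\rank_{i+1})$, acceptance in $\hbautomaton^{\rank_i,\rank_{i+1}}$ forces the presence of at least one $\leaveguess$-marked event of rank $\rank_i$ and at least one $\enterguess$-marked event of rank $\rank_{i+1}$; inspection of rules~\textbf{M3}--\textbf{M6} of $\wsautomatonprime(\program,\nodecount)$ shows that each rank carries \emph{at most} one of each mark, so there is a unique such $d_i$ and $a_{i+1}$. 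The hypotheses of Lemma~\ref{Lemma:HBAutomatonWorks} are therefore met, and it delivers $d_i\cycorder{}a_{i+1}$ for every $i\in\overline{1,k}$ (including $d_k\cycorder{}a_1$).

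To close the cycle, for each rank $\rank_i$ of $\cycletype$ I would apply Lemma~\ref{Lemma:MarkedWSAutomatonGuesses} to the pair $(a_i,d_i)$, obtaining events $b_i,c_i\in\tau$ with
\[
a_i\equivalenceorder^*b_i\progorder^*c_i\equivalenceorder^*d_i.
\]
Chaining these segments with the conflict/identity edges $d_i\cycorder{}a_{i+1}$ gives exactly the shape~\eqref{Tag:CyclicChain}, and the rank-disjointness condition is inherited from the definition of a cycle type ($\rank_i\neq\rank_j$ for $i\neq j$).

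The only mildly delicate point is ensuring the ``uniqueness'' of the $\enterguess$- and $\leaveguess$-marked events per rank so that Lemma~\ref{Lemma:HBAutomatonWorks} applies verbatim; this is a direct consequence of the monotone progression $\bottom\to\enterguess\to\leaveguess$ of $\markingfunc(\rank)$ enforced by the transition relation $\wstransitionsprime$. Everything else is just concatenating the edges supplied by Lemmas~\ref{Lemma:MarkedWSAutomatonGuesses} and~\ref{Lemma:HBAutomatonWorks} along the prescribed order of ranks in~$\cycletype$.
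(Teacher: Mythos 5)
Your proposal is correct and follows essentially the same route as the paper's own (very terse) proof: both claims are obtained by citing Lemma~\ref{Lemma:MarkedWSAutomatonWorksAsBefore} with Lemma~\ref{Lemma:WSAutomatonGeneratesValidComputations} for membership in $\computationsOf{\program,\nodecount}$, and Lemma~\ref{Lemma:MarkedWSAutomatonGuesses} with Lemma~\ref{Lemma:HBAutomatonWorks} for the cycle of type $\cycletype$. Your additional observation that rules~\textbf{M3}--\textbf{M6} force at most one $\enterguess$- and one $\leaveguess$-marked event per rank, so that the hypotheses of Lemma~\ref{Lemma:HBAutomatonWorks} are indeed met, is a detail the paper leaves implicit but is entirely consistent with its argument.
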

\begin{proof}
By Lemma~\ref{Lemma:WSAutomatonGeneratesValidComputations} and Lemma~\ref{Lemma:MarkedWSAutomatonWorksAsBefore}, $\tau$ is a computation of program $(\program, \nodecount)$.
By Lemma~\ref{Lemma:MarkedWSAutomatonGuesses} and Lemma~\ref{Lemma:HBAutomatonWorks}, $\tau$ has a dependence chain~\eqref{Tag:CyclicChain} of type $\cycletype$.
\end{proof}

\begin{lemma}\label{Lemma:ViolatingWellShapedComputationIsInIntersection}
Consider a cycle type $\cycletype$ and let $\tau$ be a normal-form computation of $(\program, \nodecount)$ that has a cycle~\eqref{Tag:CyclicChain} of this type.
Then $\tau\in\projectionOf{\big(\langOf{\wsautomatonprime(\program, \nodecount)}\cap\langOf{\hbautomaton^{\cycletype}}\big)}{\events}$.
\end{lemma}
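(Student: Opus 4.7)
\proofatend
Let $\cycletype = \rank_1 \ldots \rank_k$, and fix events $a_i, b_i, c_i, d_i \in \tau$ for $i \in \overline{1,k}$ realising the cycle~\eqref{Tag:CyclicChain}, so that $\rankOf{a_i} = \ldots = \rankOf{d_i} = \rank_i$ and (with indices taken cyclically) $d_i \cycorder a_{i+1}$. The strategy is to first invoke Lemma~\ref{Lemma:MarkedWSAutomatonGuessesEverything} to obtain a marking of $\tau$ accepted by $\wsautomatonprime(\program, \nodecount)$ in which exactly the events $a_i$ carry $\enterguess$ and exactly the events $d_i$ carry $\leaveguess$, and then to use Lemma~\ref{Lemma:HBAutomatonWorks} to conclude that this marked computation is accepted by each $\hbautomaton^{\rank_i,\rank_{i+1}}$, hence by their intersection $\hbautomaton^\cycletype$.

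The first step is to check the hypothesis of Lemma~\ref{Lemma:MarkedWSAutomatonGuessesEverything}: for every $i$, either $\addrOf{a_i} \neq\ \bottom$ or $\kindOf{a_i} = \barrierkind$, and likewise for $d_i$. Each such event sits at an endpoint of a $\cycorder$-edge of the cycle. A $\conflictorder$-edge forces both endpoints to access the same address, so neither can have $\addrOf = \bottom$; in particular they cannot be $\readkind$ or $\writekind$ events. An $\equivalenceorder$-edge between events of \emph{different} ranks can only connect matching barriers, since the identity chain $\readkind\equivalenceorder\popakind\equivalenceorder\popbkind$ stays within the issuing process. Hence every $a_i$ and every $d_i$ is either a memory access or a barrier, as required.

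Applying Lemma~\ref{Lemma:MarkedWSAutomatonGuessesEverything} to the ranks $\{\rank_1,\ldots,\rank_k\}$ and the events $a_i, b_i, c_i, d_i$ yields a marked computation $\tau' \in \langOf{\wsautomatonprime(\program, \nodecount)}$ with $\projectionOf{\tau'}{\events} = \tau$ in which, for each $i$, the event $a_i$ is marked with a set containing $\enterguess$ and $d_i$ with a set containing $\leaveguess$, while all remaining events carry the empty marking. Now for each pair $(\rank_i, \rank_{i+1})$ the hypotheses of Lemma~\ref{Lemma:HBAutomatonWorks} are satisfied: pairwise distinctness of the $\rank_j$'s ensures that $d_i$ is the only $\leaveguess$-marked event of rank $\rank_i$ in $\tau'$ and $a_{i+1}$ is the only $\enterguess$-marked event of rank $\rank_{i+1}$. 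Since $d_i \cycorder a_{i+1}$ by assumption, the lemma gives $\tau' \in \langOf{\hbautomaton^{\rank_i, \rank_{i+1}}}$. Intersecting over $i$ yields $\tau' \in \langOf{\hbautomaton^\cycletype}$, and projecting the resulting common computation to $\events$ produces $\tau \in \projectionOf{\big(\langOf{\wsautomatonprime(\program, \nodecount)} \cap \langOf{\hbautomaton^\cycletype}\big)}{\events}$.

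The only delicate point is the first step: one must rule out that the cycle chooses $a_i$ or $d_i$ of kind $\readkind$ or $\writekind$, for which $\addrOf = \bottom$ and the precondition of Lemma~\ref{Lemma:MarkedWSAutomatonGuessesEverything} would fail. This is resolved by the structural argument above: $\readkind$ and $\writekind$ events never terminate a $\cycorder$-edge, so the cycle~\eqref{Tag:CyclicChain} already commits $a_i$ and $d_i$ to be memory accesses or barriers, and no additional reshuffling of the cycle is needed.
\endproofatend
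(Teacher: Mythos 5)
Your proof is correct and follows essentially the same route as the paper: apply Lemma~\ref{Lemma:MarkedWSAutomatonGuessesEverything} to mark the $a_i$ with $\enterguess$ and the $d_i$ with $\leaveguess$, then use Lemma~\ref{Lemma:HBAutomatonWorks} and $d_i\cycorder a_{i+1}$ to get acceptance by each $\hbautomaton^{\rank_i,\rank_{i+1}}$ and hence by the intersection. Your extra verification that each $a_i,d_i$ is a memory access or a barrier just makes explicit a precondition the paper handles with its remark (after Lemma~\ref{Lemma:EachRankContributesOnce}) that identity edges between distinct ranks are barrier edges.
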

\begin{proof}
By Lemma~\ref{Lemma:MarkedWSAutomatonGuessesEverything}, $\wsautomatonprime(\program, \nodecount)$ can generate $\tau'$ with $\projectionOf{\tau'}{\events}=\tau$, the events $a_i,d_i$ from~\eqref{Tag:CyclicChain} marked by $\enterguess$ and $\leaveguess$ respectively, and the other events marked by $\emptyset$.
By Lemma~\ref{Lemma:HBAutomatonWorks}, the automata $\hbautomaton^{\rank_i,\rank_{i+1}}$ will accept $\tau'$, due to $d_i\cycorder{}a_{i+1}$.
\end{proof}
\endlemmaatend

\begin{theorem}\label{Lemma:CheckingIntersectionEmptinessIsEnough}
$\program$ is robust iff $\langOf{\wsautomatonprime(\program, \nodecount)}\cap\langOf{\hbautomaton^{\cycletype}}=\emptyset$ for all cycle types $\cycletype$.
\end{theorem}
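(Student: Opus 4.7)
My plan is to prove the biconditional by combining the lemmas established in Sections~\ref{Section:WellShapedComputations} and~\ref{Section:GeneratingViolatingWellShapedComputations} almost mechanically, since the real conceptual work (normal form, marked automaton, cycle type automaton) has already been done. The proof splits into the two expected directions, both argued by contraposition.

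For the ``only if'' direction I would assume $\langOf{\wsautomatonprime(\program, \nodecount)}\cap\langOf{\hbautomaton^{\cycletype}}\neq\emptyset$ for some cycle type $\cycletype$, and pick any marked computation $\tau'$ in the intersection. Lemma~\ref{Lemma:ComputationFromIntersectionIsViolating} then hands me a computation $\tau=\projectionOf{\tau'}{\events}\in\computationsOf{\program, \nodecount}$ that carries a cycle of shape~\eqref{Tag:CyclicChain} of type $\cycletype$. Since such a cycle is a non-trivial happens-before cycle (the chain mixes $\progorder$ and $\cycorder$ edges across distinct ranks, so it is not contained in $\equivalenceorder$), $\tau$ is a violation, and hence $(\program,\nodecount)$ is not robust.

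For the ``if'' direction I would assume $(\program, \nodecount)$ is not robust. By Theorem~\ref{Theorem:OnlyNeedWellShapedComputations} there is a normal-form violation $\tau\in\computationsOf{\program, \nodecount}$, and by Lemma~\ref{Lemma:EachRankContributesOnce} the violation can be witnessed by a cycle of shape~\eqref{Tag:CyclicChain} in which each rank appears at most once. Reading off the sequence of ranks from this cycle gives a concrete cycle type $\cycletype$. Lemma~\ref{Lemma:ViolatingWellShapedComputationIsInIntersection} then places $\tau$ into $\projectionOf{\bigl(\langOf{\wsautomatonprime(\program, \nodecount)}\cap\langOf{\hbautomaton^{\cycletype}}\bigr)}{\events}$, showing the intersection is non-empty for this particular $\cycletype$.

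I do not expect a genuine obstacle here: the theorem is essentially the glue that packages Theorem~\ref{Theorem:OnlyNeedWellShapedComputations}, Lemma~\ref{Lemma:EachRankContributesOnce}, Lemma~\ref{Lemma:ComputationFromIntersectionIsViolating}, and Lemma~\ref{Lemma:ViolatingWellShapedComputationIsInIntersection} together. The only subtlety to double-check is that the cycle produced by Lemma~\ref{Lemma:EachRankContributesOnce} really does qualify as ``non-trivial'', i.e.\ is not contained in $\equivalenceorder^*$; this is guaranteed because consecutive segments $a_i\equivalenceorder^*\cdots\equivalenceorder^*d_i$ and $a_{i+1}\equivalenceorder^*\cdots\equivalenceorder^*d_{i+1}$ use different ranks and are joined by $\cycorder=\conflictorder\cup\equivalenceorder$, which, together with the $\progorder^*$ edges inside each segment, forces the cycle to leave $\equivalenceorder^*$. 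With that observation in place the argument is a short two-paragraph proof quoting the four results above.
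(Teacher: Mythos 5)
Your proposal is correct and follows essentially the same route as the paper, whose proof simply cites Theorem~\ref{Theorem:OnlyNeedWellShapedComputations}, Lemma~\ref{Lemma:EachRankContributesOnce}, Lemma~\ref{Lemma:ComputationFromIntersectionIsViolating}, and Lemma~\ref{Lemma:ViolatingWellShapedComputationIsInIntersection} in exactly the way you assemble them. The only remark is that the non-triviality check you worry about at the end need not be re-argued by hand: it is already packaged in the ``iff'' of Lemma~\ref{Lemma:EachRankContributesOnce}, which lets you conclude that any computation carrying a cycle of shape~\eqref{Tag:CyclicChain} is violating.
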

\proofatend
The statement follows from Theorem~\ref{Theorem:OnlyNeedWellShapedComputations}, Lemma~\ref{Lemma:ComputationFromIntersectionIsViolating}, Lemma~\ref{Lemma:EachRankContributesOnce}, and Lemma~\ref{Lemma:ViolatingWellShapedComputationIsInIntersection}.
\endproofatend

We can now prove Theorem~\ref{th:main}.
To check whether $(\program, \nodecount)$ is robust, we go over all cycle types $\cycletype=\rank_1\ldots \rank_k$.
This enumeration of cycle types can be done in space that is polynomial in $\nodecount$.
For each such sequence, we check if $\langOf{\wsautomatonprime(\program,\nodecount)}\cap\langOf{\hbautomaton^\cycletype}=\emptyset$.
By Theorem~\ref{Lemma:CheckingIntersectionEmptinessIsEnough}, the program is robust iff all intersections are empty.
By Lemma~\ref{Lemma:MultiheadedIntersection}, there is a $4$-headed finite state automaton $W$ with $\langOf{W}=\langOf{\wsautomatonprime(\program, \nodecount)}\cap\langOf{\hbautomaton^\cycletype}$.
Since the size of $W$ is exponential in the size of $(\program, \nodecount)$ and emptiness is in \nl\ by Lemma~\ref{Lemma:EmptinessComplexity}, deciding $\langOf{W}=\emptyset$ can be done in space that is polynomial in $(\program, \nodecount)$.
This shows robustness is in $\pspace$.

\begin{spacing}{0.9}
\bibliographystyle{plain}
\bibliography{cited}
\end{spacing}

\clearpage
\appendix\appendixtrue
\section{Missing Proofs}

For some of the following proofs, we assume that Table~\ref{Table:EventAutomatonRulesFull} and Table~\ref{Table:WellShapedAutomatonRulesFull} associate with each event $e$ the transition in the program that produced this event: $\instructionOf{e}$.
Also, for a $\readkind$, $\writekind$, $\popakind$, or $\popbkind$ event we write $\queueOf{e}$ to denote the id of the queue being modified by this event $\event$.

\printproofs

\end{document}